\newcommand{\lb}{\mathop{\mathrm{l\hbox to 0pt{\kern .05em $\overline{\vbox to 0.9ex{\hbox to .5em {~}}}$\hss}n}}\nolimits}
\newcommand{\lgb}{\mathop{\mathrm{l\hbox to 0pt{\kern .05em $\overline{\vbox to 1.1ex{\hbox to .85em {~}}}$\hss}og}}\nolimits}
\NewDocumentCommand{\xnewtheorem}{m o m}
 {%
  \IfNoValueTF{#2}
   {\newtheorem{#1}{#3}}
   {%
    \newaliascnt{#1}{#2}%
    \newtheorem{#1}[#1]{#3}%
    \aliascntresetthe{#1}%
    \expandafter\newcommand\csname #1autorefname\endcsname{#3}%
   }%
 }
\def\input{texlets/#}1{\input{texlets/#1}}
\long\def\printmath #1{\ifmmode#1 \else$\printmath{#1}$\expandafter\xspace\fi}
\long\def\newproblem#1#2#3{
 \expandafter	\def \csname #1\endcsname {#2\xspace}
 \expandafter	\def \csname #1long\endcsname {#3\xspace}
 \expandafter	\def \csname #1math\endcsname {\mbox{#2} }
	}
\def\bO#1{\printmath{\mathcal{O}\left(#1\right)}}
\def\bOm#1{\printmath{\Omega \left(#1\right)}}
\long\def\newbasevar#1#2 {
	\expandafter\def\csname base#1math\endcsname##1##2##3##4{{^{##3}_{##4}{#2}_{##1}^{##2}}}
}
\long\def\newvarfour#1#2{
	\newbasevar{#1}{#2}
	\expandafter\def\csname #1\endcsname##1##2##3##4{\printmath{\csname base#1math\endcsname{##1}{##2}{##3}{##4}}}
}
\long\def\newvarthree#1#2{
	\newbasevar{#1}{#2}
	\expandafter\def\csname #1\endcsname##1##2##3{\printmath{\csname base#1math\endcsname{##1}{##2}{##3}{}}}
}
\long\def\newvartwo#1#2{
	\newbasevar{#1}{#2}
	\expandafter\def\csname #1\endcsname##1##2{\printmath{\csname base#1math\endcsname{##1}{##2}{}{}}}
}
\long\def\newvarone#1#2{
	\newbasevar{#1}{#2}
	\expandafter\def\csname #1\endcsname##1{\printmath{\csname base#1math\endcsname{##1}{}{}{}}}
}
\long\def\newvar#1#2{
	\newbasevar{#1}{#2}
	\expandafter\def\csname #1\endcsname{\printmath{\csname base#1math\endcsname{}{}{}{}}}
}
\long\def\newoperator#1#2#3{
	\expandafter\def\csname #1\endcsname ##1##2{\printmath{##1#2##2#3}}
}
\long\def\varoperator#1#2#3{
	\expandafter\def\csname #1\endcsname ##1{\printmath{\csname #2\endcsname{#3}{##1}}}
}
\long\def\newbinaryrelation#1#2{
	\expandafter\def\csname #1\endcsname ##1##2{\printmath{##1#2##2}}
}
\long\def\newntupel#1#2{
	\expandafter\def\csname #1\endcsname ##1##2{\printmath{\left(##1,##2\right)}}
}
\long\def\newsetprinter#1#2{
	\expandafter\def\csname #1\endcsname ##1##2{\printmath{\left\{##1#2##2\right\}}}
}
\long\def\makeparamtovar#1#2#3#4#5#6#7#8#9{
	\expandafter\def\csname#2\endcsname{}
	\ifnumcomp{#1}{=}{1}
		{\expandafter\def\csname#2\endcsname{#3}}
		{}
	\ifnumcomp{#1}{=}{2}
		{\expandafter\def\csname#2\endcsname{#4}}
		{}
	\ifnumcomp{#1}{=}{3}
		{\expandafter\def\csname#2\endcsname{#5}}
		{}
	\ifnumcomp{#1}{=}{4}
		{\expandafter\def\csname#2\endcsname{#6}}
		{}
	\ifnumcomp{#1}{=}{5}
		{\expandafter\def\csname#2\endcsname{#7}}
		{}
	\ifnumcomp{#1}{=}{6}
		{\expandafter\def\csname#2\endcsname{#8}}
		{}
	\ifnumcomp{#1}{=}{7}
		{\expandafter\def\csname#2\endcsname{#9}}
		{}
}
\long\def\createvar#1#2#3{
	\StrLen{#3}[\totalargc]
	\expandafter\edef\csname #1totalargc\endcsname{\totalargc} 
	\StrLen{#3}[\argc]
	\StrRight{#3}{\argc}[\argv]

	\expandafter\edef\csname #1saifunc\endcsname{-1}
	\expandafter\edef\csname #1saiindex\endcsname{-1}
	\expandafter\edef\csname #1saipraeindex\endcsname{-1}
	\expandafter\edef\csname #1saiexp\endcsname{-1}
	\expandafter\edef\csname #1saipraeexp\endcsname{-1}

	\newcounter{cnt#1}
	\setcounter{cnt#1}{1}
	\whileboolexpr{ not (test { \ifnumequal{\argc}{0} } )
	}{
		\StrLeft{\argv}{1}[\param]	
		\StrGobbleLeft{\argv}{1}[\argv]	
		\IfStrEqCase{\param}{
			{f}{
				\expandafter\edef\csname #1saifunc\endcsname{\arabic{cnt#1}}	
			}
			{i}{ 
				\expandafter\edef\csname #1saiindex\endcsname{\arabic{cnt#1}}
			}
			{e}{ 
				\expandafter\edef\csname #1saiexp\endcsname{\arabic{cnt#1}}
			}
			{u}{ 
				\expandafter\edef\csname #1saipraeindex\endcsname{\arabic{cnt#1}}
			}
			{w}{ 
				\expandafter\edef\csname #1saipraeexp\endcsname{\arabic{cnt#1}}
			}
		}
		[bulub] 

		\StrLen{\argv}[\argc]
		\stepcounter{cnt#1}
	}
	\long\expandafter\gdef\csname base#1sort\endcsname##1##2##3##4##5{
		\expandafter\def\csname #1localfirst\endcsname{\ifnumcomp{\csname #1totalargc\endcsname}{>}{0}{##1}{}}
		\expandafter\def\csname #1localsec\endcsname{\ifnumcomp{\csname #1totalargc\endcsname}{>}{1}{##2}{}}
		\expandafter\def\csname #1localthird\endcsname{\ifnumcomp{\csname #1totalargc\endcsname}{>}{2}{##3}{}}
		\expandafter\def\csname #1localfourth\endcsname{\ifnumcomp{\csname #1totalargc\endcsname}{>}{3}{##4}{}}
		\expandafter\def\csname #1localfifth\endcsname{\ifnumcomp{\csname #1totalargc\endcsname}{>}{4}{##5}{}}
		\makeparamtovar{\csname #1saifunc\endcsname}{#1tmpfunc}{\csname #1localfirst\endcsname}{\csname #1localsec\endcsname}{\csname #1localthird\endcsname}{\csname #1localfourth\endcsname}{\csname #1localfifth\endcsname}{}{}
		\makeparamtovar{\csname #1saiindex\endcsname}{#1tmpindex}{\csname #1localfirst\endcsname}{\csname #1localsec\endcsname}{\csname #1localthird\endcsname}{\csname #1localfourth\endcsname}{\csname #1localfifth\endcsname}{}{}
		\makeparamtovar{\csname #1saipraeindex\endcsname}{#1tmppraeindex}{\csname #1localfirst\endcsname}{\csname #1localsec\endcsname}{\csname #1localthird\endcsname}{\csname #1localfourth\endcsname}{\csname #1localfifth\endcsname}{}{}
		\makeparamtovar{\csname #1saiexp\endcsname}{#1tmpexp}{\csname #1localfirst\endcsname}{\csname #1localsec\endcsname}{\csname #1localthird\endcsname}{\csname #1localfourth\endcsname}{\csname #1localfifth\endcsname}{}{}
		\makeparamtovar{\csname #1saipraeexp\endcsname}{#1tmppraeexp}{\csname #1localfirst\endcsname}{\csname #1localsec\endcsname}{\csname #1localthird\endcsname}{\csname #1localfourth\endcsname}{\csname #1localfifth\endcsname}{}{}
		\sbox0{$\csname #1tmppraeindex\endcsname$}%
		\ifdim\wd0=0pt
			{}
		\else
			_{\csname #1tmppraeindex\endcsname}
		\fi
		\sbox1{$\csname #1tmppraeexp\endcsname$}%
		\ifdim\wd1=0pt
			{}
		\else
			^{\csname #1tmppraeexp\endcsname}
		\fi
		{#2}
		\sbox2{$\csname #1tmpindex\endcsname$}%
		\ifdim\wd2=0pt
			{}
		\else
			_{\csname #1tmpindex\endcsname}
		\fi
		\sbox3{$\csname #1tmpexp\endcsname$}%
		\ifdim\wd3=0pt
			{}
		\else
			^{\csname #1tmpexp\endcsname}
		\fi
		\sbox4{$\csname #1tmpfunc\endcsname$}%
		\ifdim\wd4=0pt
			{}
		\else
			{\left(\csname #1tmpfunc\endcsname\right)}
		\fi
	}

	\ifnumcomp{\csname #1totalargc\endcsname}{=}{0}
		{\long\expandafter\gdef\csname #1\endcsname{\printmath{\csname base#1sort\endcsname{}{}{}{}{}}}}
		{}
	\ifnumcomp{\csname #1totalargc\endcsname}{=}{1}
		{\long\expandafter\gdef\csname #1\endcsname##1{\printmath{\csname base#1sort\endcsname{##1}{}{}{}{}}}}
		{}
	\ifnumcomp{\csname #1totalargc\endcsname}{=}{2}
		{\long\expandafter\gdef\csname #1\endcsname##1##2{\printmath{\csname base#1sort\endcsname{##1}{##2}{}{}{}}}}
		{}
	\ifnumcomp{\csname #1totalargc\endcsname}{=}{3}
		{\long\expandafter\gdef\csname #1\endcsname##1##2##3{\printmath{\csname base#1sort\endcsname{##1}{##2}{##3}{}{}}}}
		{}
	\ifnumcomp{\csname #1totalargc\endcsname}{=}{4}
		{\long\expandafter\gdef\csname #1\endcsname##1##2##3##4{\printmath{\csname base#1sort\endcsname{##1}{##2}{##3}{##4}{}}}}
		{}
	\ifnumcomp{\csname #1totalargc\endcsname}{=}{5}
		{\long\expandafter\gdef\csname #1\endcsname##1##2##3##4##5{\printmath{\csname base#1sort\endcsname{##1}{##2}{##3}{##4}{##5}}}}
		{}
}
\tikzstyle{key} = [circle,draw=black!100]
\tikzstyle{node} = [fill=black,circle,inner sep=1pt]
\colorlet{color_1}{red}
\colorlet{color_2}{blue}
\colorlet{color_3}{green}
\colorlet{color_4}{yellow}
\colorlet{base}{black}
\colorlet{bg}{white}
\colorlet{beamer_base_color}{blue}
\colorlet{beamer_primary_color}{beamer_base_color}
\colorlet{beamer_secondary_color}{beamer_base_color!50!black}
\colorlet{beamer_tertiary_color}{beamer_base_color!30!black}
\colorlet{beamer_quaternary_color}{black}
\begin{document} 

\pagestyle{plain}
\title{On the Complexity of List Ranking in the Parallel External Memory Model}

\institute{
  Institute for Theoretical Computer Science, ETH Z{\"u}rich, Switzerland \email{\{rjacob,lieberto\}@inf.ethz.ch}
  \and
  Department of Information and Computer Sciences, University of Hawaii, USA \email{nodari.sitchinava@hawaii.edu}
}
\author{
  Riko Jacob \inst{1}
  \and
  Tobias Lieber \inst{1} 
  \and
  Nodari Sitchinava \inst{2}
}
\date{\today}

\maketitle

\createvar{memorySize}{M}{}
\createvar{blockSize}{B}{}

\createvar{inputSize}{N}{}
\begin{abstract}
\footnotetext[3]{This paper is also published in the proceeding of MFCS 2014, excluding the appendix. The final publication is available at \texttt{link.springer.com}}
We study the problem of {\em list ranking} in the parallel external memory (PEM) model. 
We observe an interesting dual nature for the hardness of the problem due to limited information exchange among the processors about the structure of the list, on the one hand, and its close relationship to the problem of permuting data, which is known to be hard for the external memory models, on the other hand. 

By carefully defining the power of the computational model, we prove a permuting lower bound in the PEM model.
Furthermore, we present a stronger $\Omega(\log^2 \inputSize)$ lower bound for a special variant of the problem and for a specific range of the model parameters, which takes us a step closer toward proving a non-trivial lower bound for the list ranking problem in the bulk-synchronous parallel (BSP) and MapReduce models.
Finally, we also present an algorithm that is tight for a larger range of parameters of the model than in prior work.

%
%


\end{abstract}

\createvar{graph}{G}{}
\createvar{vertexSet}{V}{}
\createvar{edgeSet}{E}{}
\createvar{graphOf}{\graph}{i}

\createvar{sort}{\textnormal{sort}}{f}
\createvar{sortp}{\sort{}_{P}}{f}

\section{Introduction}
\label{sectionIntroduction}
\createvar{permp}{\textnormal{perm}_P}{f}
Analysis of massive graphs representing social networks using distributed programming models, such as MapReduce and Hadoop, has renewed interests in distributed graph algorithms. 
In the classical RAM model, depth-first search traversal of the graph is the building block for many graph analysis solutions. 
However, no efficient depth-first search traversal is known in the parallel/distributed setting. 
Instead, list ranking serves as such a building block for parallel solutions to many problems on graphs. 

The \emph{list ranking} problem is defined as follows: 
given a linked list compute for each node the length of the path to the end of the list.  
In the classic RAM model, the list can be ranked in linear time by traversing the list.
However, in the PRAM model (the parallel analog of the RAM model) it took almost a decade from the first solution by Wyllie~\cite{1979Wyllie} till it was solved optimally~\cite{1988AndersonMillerDetLR}. 

The problem is even more intriguing in the models that study block-wise access to memory. 
For example, in the \emph{external memory} (EM) model of Aggarwal and Vitter~\cite{1988AggarwalVitterEM} list ranking is closely related to the problem of permuting data in an array.
The EM model studies the {\em input/output (I/O) complexity} -- the number of transfers an algorithm has to perform between a disk that contains the input and a fast internal memory of size $\memorySize$.
Each transfer is performed in blocks of $\blockSize$ contiguous elements.  
In this model, permuting and, consequently, list ranking require I/O complexity which is closely related to sorting~\cite{1995ChiangExternalMemoryGraphAlgorithms}, rather than the linear complexity required in the RAM model.

In the distributed models, such as bulk-synchronous parallel (BSP)~\cite{1990ValiantBSP} or MapReduce~\cite{2008DeanMapReduce} models, the hardness of the list ranking problem is more poorly understood. These models consist of $P$ processors, each with a private memory of size $\memorySize$. With no other data storage, typically $\memorySize = \Theta(N/P)$. The data is exchanged among the processors during the \emph{communication rounds} and the number of such rounds defines the complexity metric of these models. 

One established modeling of today's commercial data centers running MapReduce is to assume $P = \Theta(N^\epsilon)$ and $M = \Theta(N^{1-\epsilon})$ for a constant $0< \epsilon < 1$. 
Since network bandwidth is usually the limiting factor of these models, $O(\log_M P) = O(1)$ communication rounds is the ultimate goal of computation on such models~\cite{2010KarloffMapReduce}.
Indeed, if each processor is allowed to send up to $M = N/P$ items to any subset of processors, permuting of the input can be implemented in a single round, while sorting takes $O(\log_M P) = O(1)$ rounds~\cite{1999GoodrichBSPSort,2011GoodrichMapReduceSort}.
On the other hand, the best known solution for list ranking is via the simulation results of Karloff et al.~\cite{2010KarloffMapReduce} by simulating the $O(\log N)$ time PRAM algorithm~\cite{1988AndersonMillerDetLR}, yielding $O(\log P)$ rounds, which is strictly worse than both sorting and permuting.  
Up to now, no non-trivial lower bounds (i.e. stronger than $\bOm{\log_M P} = \bOm{1}$) are known in the BSP and MapReduce models.  

In this paper we study lower bounds for the list ranking problem in the {\em parallel external memory (PEM)} model. 
%
%
The PEM model was introduced by Arge et al.~\cite{2008ArgeEtAlFundamentalPEM} as a parallel extension of the EM model to capture the hierarchical memory organization of modern multicore processors. 
The model consists of~$P$ processing units, each containing a private cache of size $\memorySize$, and a shared (external) memory of conceptually unlimited size (see Figure~\ref{figurePEMmodel}). 
The data is concurrently transferred between the processors' caches and shared memory in blocks of size $\blockSize$. 
The model measures the {\em parallel I/O complexity} -- the number of parallel block transfers. 
\begin{figure}[tb]
\centering
\begin{subfigure}{.3\textwidth}
\includegraphics[height=95pt]{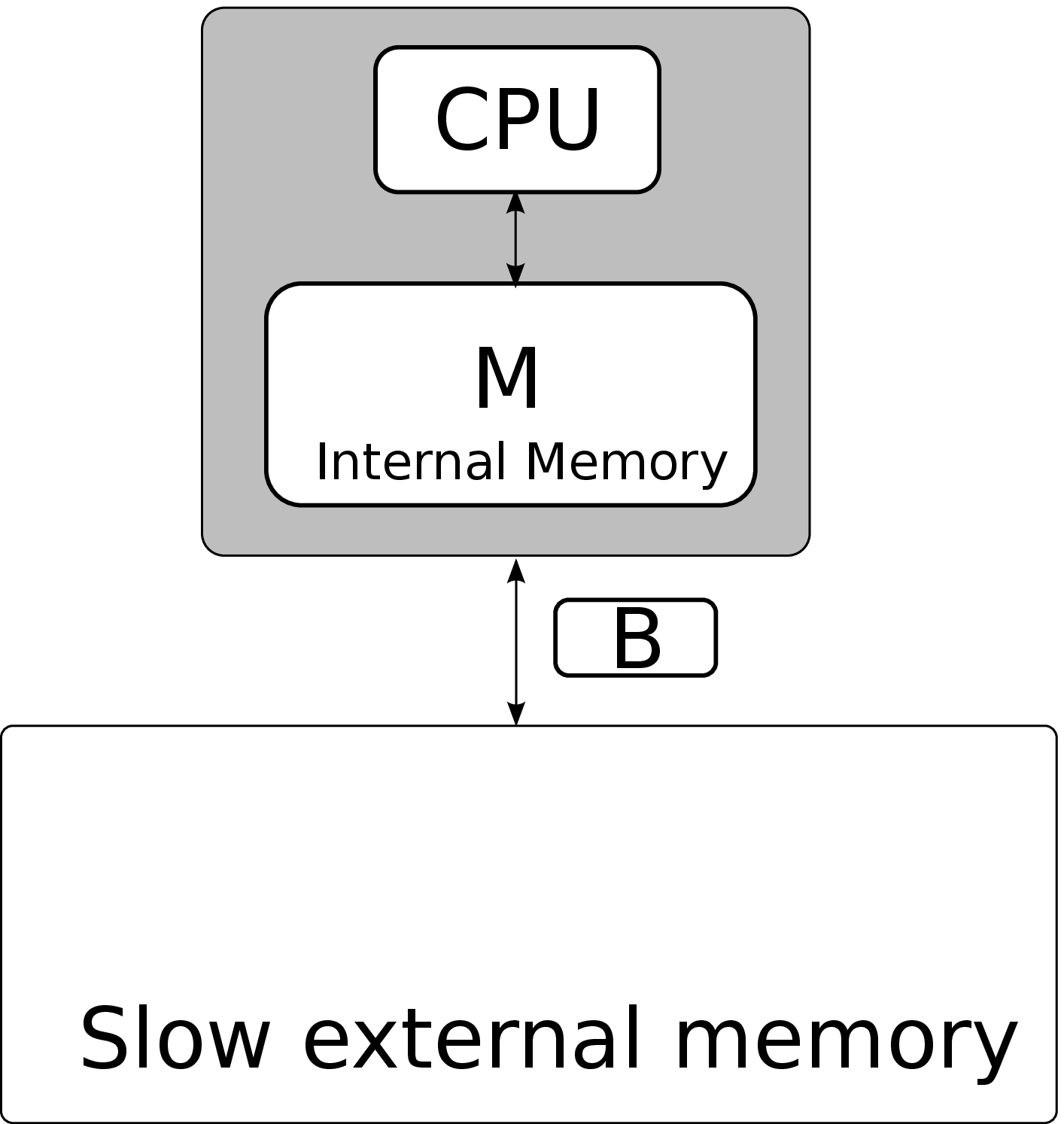}
\caption{The EM Model}
\label{figureEMmodel}
\end{subfigure}
\begin{subfigure}{.69\textwidth}
\includegraphics[height=100pt]{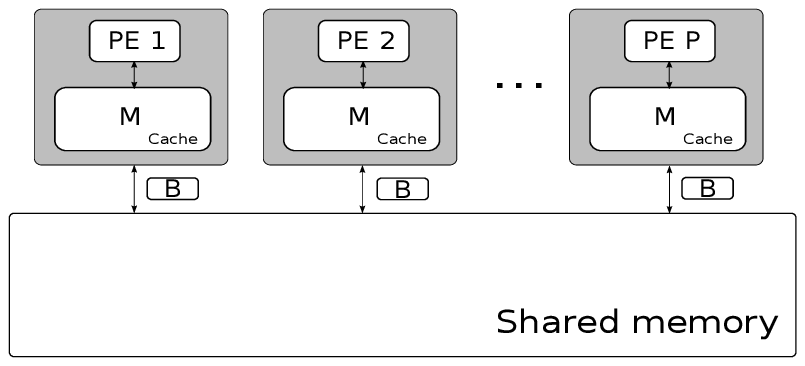}
\caption{The PEM Model}
\label{figurePEMmodel}
\end{subfigure}
\caption{The sequential and parallel external memory models.}
\end{figure}
From the discussion above, it appears that the hardness of list ranking stems from two factors: (1) limited speed of discovery of the structure of the linked list due to limited information flow among the processors,
and (2) close relationship of list ranking to the problem of permuting data. 
While only one of these challenges is captured by the distributed models or the sequential EM model, both of them are exhibited in the PEM model: the first one is captured by the distributed nature of the private caches of the model, and the second one has been shown by Greiner~\cite{2012ThesisGero} who proves that permuting data in the PEM model takes asymptotically $\permp{\inputSize,M,B} = \min\left\{\frac{{\inputSize}}{P},\frac{\inputSize}{PB}\lgb_d\frac{\inputSize}{B}\right\}$ parallel I/Os, where $d=\max\left\{2,\min\left\{\frac{M}{B},\frac{\inputSize}{PB}\right\}\right\}$ and $\lgb(x) = \max\{1, \log(x)\}$.

Part of the challenge of proving lower bounds (in any model) is restricting the model enough to be able to prove non-trivial bounds, while identifying the features of the model that emphasize the hardness of a particular problem. 

An example of such restriction in the external memory models (both sequential and parallel) is the so-called {\em indivisibility assumption}~\cite{1988AggarwalVitterEM}. 
The assumption states that each item is processed as a whole, and no information can be obtained from a part of the input, for example, by combining  several items into one. 
To our knowledge, without the indivisibility assumption, it is not clear how to prove lower bounds in the PEM model exceeding the information-theoretic lower bounds of \bOm{\log P} parallel I/Os \cite{1990KarpRamachandranPRAMListRankLB,2008ArgeEtAlFundamentalPEM}.  

\subsection{Our Contributions}
\label{sectionContributions}

\createvar{movePEM}{\text{atomic PEM}}{}
\createvar{bridgingPEM}{\text{edge-contracting PEM}}{} 
\createvar{semigroupPEM}{\text{semigroup PEM}}{}
\createvar{fusePEM}{\text{interval PEM}}{}

In this paper, we address the precise formulation of the power of the PEM model for the list ranking problem.
In \autoref{sectionModelling} we present the \movePEM model which formalizes the indivisibility assumption in the PEM model.
It can be viewed as the parallel analog of the model for proving permuting lower bounds in the sequential EM model~\cite{1988AggarwalVitterEM}. 
We extend this basic model by allowing an algorithm to perform operations on the atoms that create new atoms. 
While we always keep the indivisibility of the atoms, the precise operations and the information the algorithm has about the content of the atom varies.\todo{probably not needed?}

In the sequential EM model, Chiang et al.~\cite{1995ChiangExternalMemoryGraphAlgorithms} sketch a lower bound for list ranking via a reduction to the {\em proximate neighbors problem}.
However, since there is no equivalent to Brent's scheduling principle~\cite{1984VishkinRandomizedLR} in the PEM model~\cite{2012ThesisGero}, the lower bound does not generalize to the PEM model.

Therefore, in \autoref{sectionCountingLowerBounds} we derive a lower bound of $\Omega(\permp{\inputSize, \memorySize, \blockSize})$ parallel I/Os for the proximate neighbor problem, and two problems which are related to the list ranking problem. Our lower bounds hold for both deterministic and randomized algorithms.
In the process we provide an alternative proof for the proximate neighbor lower bound in the sequential external memory model, matching the result of Chiang et al.~\cite{1995ChiangExternalMemoryGraphAlgorithms}\todo{needed?}.
Those lower bounds essentially exploit the fact that the same problem can be represented as input in many different layouts.

The discussion in \autoref{sectionIntroduction} about the dual nature of hardness of the list ranking problem might hint at the fact that the result above is only part of the picture and a stronger lower bound might be achievable. 
Part of the challenge in proving a stronger lower bound lies in the difficulty of combining the indivisibility assumption with the restrictions on how the structure of the linked list is shared among the processors without giving the model too much power, thus, making the solutions trivial. 
We address this challenge by defining the \emph{\fusePEM} model and defining the {\em guided interval fusion (GIF)} problem (\autoref{sectionIPNLB}).
We prove that GIF requires $\Omega(\log^2 \inputSize)$ parallel I/Os in the \fusePEM. Our lower bound for GIF in the PEM model implies a $\Omega(\log \inputSize)$ lower bound for the number of rounds for GIF in the distributed models when $P = \Theta(M) = \Theta(\sqrt{\inputSize})$. 

GIF captures the way how all currently known algorithms use information to solve list ranking in all parallel/distributed models.
Therefore, if this lower bound could be broken for the list ranking problem, it will require completely new algorithmic techniques.
Thus, our result brings us a step closer to proving the unconditional $\Omega(\log P)$ lower bound in the BSP and MapReduce models.

Finally, in \autoref{sectionUpperBounds} we improve the PEM list ranking algorithm of Arge et al.~\cite{2010ArgeEtAlGraphAlgoPEM} to work efficiently for a larger range of parameters in the PEM model.

\section{Modeling}
\label{sectionModelling}

We extend the description of the PEM model given in \autoref{sectionIntroduction} to define the PEM model more precisely. 
Initially, the data resides in the shared memory. 
To process any element, it must be present in the corresponding processor's cache.
The shared memory is partitioned into blocks of $\blockSize$ contiguous elements and the transfer of data between the shared memory and caches is performed by transferring these blocks as units.
Each transfer, an {\em input-output operation}, or simply {\em I/O}, can transfer one block of $\blockSize$ items between the main memory and each processors' cache. 
Thus, up to $P$ blocks can be transferred in each \emph{parallel I/O} operation. 
The complexity measure of a PEM algorithm is the number of parallel I/Os that the algorithm performs.  

Similar to the PRAM model, there are several policies in the PEM model, for handling simultaneous accesses by multiple processors to a block in shared memory. 
In this paper we consider the CREW PEM model, using a block wise concurrent read, exclusive write policy.


In order to prove lower bounds we make the definition of the model more precise by stating what an algorithm is able to do in each step.
In particular, we assume that each element of the input is an indivisible unit of data, an {\em atom}, which consumes one memory cell in the cache or shared memory.
Such atoms come into existence either as input atoms, or by an operation, as defined later, on two atoms.
A program or an algorithm has limited knowledge about the content of an atom. 
In this paper an atom does not provide any information. 
Furthermore, the \emph{\movePEM} is limited to the following operations: 
an I/O operation reads or writes one block of up to~$B$ atoms in the shared memory, and atoms can be copied or deleted.
Formal definitions of similar PEM machines can be found in \cite{2008ArgeEtAlFundamentalPEM,2012ThesisGero}.

For providing lower bounds for different problems, the concept of the \linebreak\movePEM is extended in later sections.

In the following, we distinguish between {algorithms} and {programs} in the following way:
In an \emph{algorithm} the control flow might depend on the input, i.e., there are conditional statements (and therefore loops).
In contrast, a \emph{program} has no conditional statements and is a sequence of valid instructions for a PEM model, independent of the input (atoms). 
For a given instance of a computational task, a program can be seen as an instantiation of an algorithm to which all results of conditional statements and index computations are presented beforehand.
Note, that in the problems considered in \autoref{sectionCountingLowerBounds} the copying and the deletion operation of the \movePEM do not help at all, since a program can be stripped down to operations which operate on atoms which influence the final result. 

%

\section{Counting Lower Bounds to the List Ranking Problem}
\label{sectionCountingLowerBounds}
In this section we prove the lower bound for the list ranking problem by showing the lower bound to the proximate neighbors problem~\cite{1995ChiangExternalMemoryGraphAlgorithms} and reducing it to the problems of semigroup evaluation, edge contraction and, finally, list ranking. 

\subsection{Proximate Neighbors Problem in PEM}

\createvar{labelfunction}{\lambda}{f}
\createvar{labeledAtoms}{L}{}

\begin{definition}[\cite{1995ChiangExternalMemoryGraphAlgorithms}]
  A \emph{block permutation} describes the content of the shared memory of a PEM configuration as a set of at most~$B$ atoms for each block.
\end{definition}

\begin{definition}
  An instance of the \emph{proximate neighbors problem} of size \inputSize consists of atoms~$x_i$ for $i\in[\inputSize]$.
  All atoms are labeled by a labeling function \mbox{\labelfunction{}: $[\inputSize] \mapsto [\frac{\inputSize}{2}]$} with $|\labelfunction{}^{-1}(i)|=2$.
  An output block permutation solves the problem if for every~$i\in[\frac{\inputSize}{2}]$ the two neighboring atoms $\labelfunction{}^{-1}(i)$ are stored in the same block. 
  The blocks in such an output may contain less than~$B$ atoms.
\end{definition}


\createvar{permutationProblem}{A}{}
\begin{lemma}
\label{theoremGeneralizedPermutingTasksLbPEM}
Let \permutationProblem be a computational problem of size \inputSize for which an algorithm has to be capable of generating at least $\left(\frac{\inputSize}{eB}\right)^{cN}$ block permutations, for a constant~\mbox{$c >0$}. 
Then in the CREW \movePEM model with $P\leq \frac{\inputSize}{B}$ processors, at least half of the input instances of \permutationProblem require 
\bOm{\permp{\inputSize,M,B}} parallel I/Os. 
\end{lemma}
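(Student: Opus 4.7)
\medskip\noindent\textbf{Proof proposal.}
The plan is to adapt the Aggarwal--Vitter counting argument for permuting to the CREW atomic PEM. By the remark closing \autoref{sectionModelling}, copying and deletion contribute nothing to the set of producible block permutations, so each elementary I/O leaves the algorithm at most two free choices: the shared-memory block being accessed and, for a write, which $\blockSize$-subset of the processor's cache is deposited. I will upper bound the number of distinct output block permutations that any $T$-parallel-I/O atomic-PEM algorithm can realise, and compare this count to the $(\inputSize/(e\blockSize))^{c\inputSize}$ figure given by the hypothesis.

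For the counting, $T$ parallel I/Os with $P$ processors comprise at most $TP$ elementary I/Os. At each elementary I/O the processor has at most $\bO{\inputSize/\blockSize+TP}$ choices of block (any block that exists by that step) and, for a write, at most $\binom{\memorySize'}{\blockSize}$ choices of which $\blockSize$ atoms from the cache to deposit, where $\memorySize'=\min\{\memorySize,\inputSize/P\}$ is the effective cache capacity (no processor can ever usefully hold more than $\inputSize/P$ distinct atoms in any balanced computation). Reads leave shared memory intact and so do not change the final layout; only writes do. Hence
\[
Y(T)\;\le\;\bigl[\,\bO{\inputSize/\blockSize+TP}\cdot\binom{\memorySize'}{\blockSize}\,\bigr]^{TP}.
\]

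If more than half of the input instances of $A$ admit a $T$-parallel-I/O solution, then by the hypothesis the algorithm produces, on those instances alone, at least $\tfrac12(\inputSize/(e\blockSize))^{c\inputSize}$ distinct output block permutations, forcing $Y(T)\ge\tfrac12(\inputSize/(e\blockSize))^{c\inputSize}$. Taking logarithms, using $\log\binom{k}{\blockSize}\le\blockSize\log(ek/\blockSize)$, and assuming $T<\inputSize/P$ (otherwise the claim is already met) simplify this to
\[
TP\cdot\blockSize\log(e\memorySize'/\blockSize)\;\ge\;\bOm{\inputSize\log(\inputSize/\blockSize)},
\]
and solving for $T$ yields $T=\bOm{(\inputSize/(P\blockSize))\cdot\lgb_d(\inputSize/\blockSize)}$ with $d=\max\{2,\min\{\memorySize/\blockSize,\inputSize/(P\blockSize)\}\}$, matching $\permp{\inputSize,\memorySize,\blockSize}$.

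The main obstacle I anticipate is the refinement $\memorySize'=\min\{\memorySize,\inputSize/P\}$: the naive bound $\binom{\memorySize}{\blockSize}$ only yields base $\memorySize/\blockSize$ in the logarithm, whereas the desired $\lgb_d$ also requires exploiting that a single processor cannot benefit from cache beyond its share of the work. Formalising this---via a per-processor atom-encounter argument charging each distinct atom in a processor's cache to either a read issued by that processor or to its share of the output---is where the proof needs the most care. The ``at least half the instances'' clause is then handled by contraposition: if $T=o(\permp{\inputSize,\memorySize,\blockSize})$ sufficed on more than half the instances, the counting bound $Y(T)$ would already fall below $\tfrac12(\inputSize/(e\blockSize))^{c\inputSize}$, contradicting the hypothesis.
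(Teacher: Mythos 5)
Your route is the same one the paper takes: its proof of this lemma is a one-line appeal to the counting argument of Theorem~2.7 in Greiner's thesis~\cite{2012ThesisGero}, i.e.\ precisely the Aggarwal--Vitter-style bound you reconstruct (count the block permutations producible by programs with $T$ parallel I/Os via block-address choices and a binomial factor per write, and compare with the $\left(\inputSize/(e\blockSize)\right)^{c\inputSize}$ outputs the problem demands, handling the ``half of the instances'' clause by contraposition). So in structure the proposal matches the intended proof.

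The genuine gap is the step you yourself single out: replacing $\binom{\memorySize}{\blockSize}$ by $\binom{\memorySize'}{\blockSize}$ with $\memorySize'=\min\{\memorySize,\inputSize/P\}$, justified by the claim that ``no processor can ever usefully hold more than $\inputSize/P$ distinct atoms in any balanced computation.'' A lower bound may not impose such a restriction on the algorithm: in the CREW \movePEM reads are concurrent and unrestricted, so a single processor can accumulate up to $\memorySize$ distinct atoms over time even when $\memorySize\gg\inputSize/P$, and ``balanced'' is not a property of the model. The refinement that actually yields the base $d=\max\{2,\min\{\memorySize/\blockSize,\inputSize/(P\blockSize)\}\}$ is temporal rather than a per-processor share: when a processor performs its $j$-th elementary I/O, its cache can contain at most $\min\{\memorySize,j\blockSize\}\le\min\{\memorySize,T\blockSize\}$ atoms, since every atom in its cache must have entered through one of that processor's earlier reads; one then assumes for contradiction $T=o(\permp{\inputSize,\memorySize,\blockSize})$, so that $\log\bigl(e\min\{\memorySize/\blockSize,T\}\bigr)$ is of the order of $\log d$ up to lower-order terms, and your logarithmic computation goes through. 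Your sketched ``charge atoms to reads or to the output share'' argument gravitates toward this, but as written it does not establish $\min\{\memorySize,\inputSize/P\}$, and without the corrected capacity bound the count only gives base $\memorySize/\blockSize$, which is strictly weaker than the stated $\permp{\inputSize,\memorySize,\blockSize}$ when $P\memorySize\gg\inputSize$. A smaller bookkeeping point: your contraposition tacitly assumes that solving more than half of the instances forces the production of proportionally many distinct output block permutations; this holds in the intended applications (e.g.\ one output block permutation solves at most $(\blockSize/2)^{\inputSize/2}$ proximate-neighbors instances) but should be stated as part of the hypothesis being used.
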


\begin{proof}
Straightforward generalization of the proof to Theorem $2.7$ in~\cite{2012ThesisGero}.
\qed
\end{proof}

\begin{theorem}
\label{lemmaProximateNeighborsLbPEM} 
At least half of the instances of the proximate neighbors problem of size $N$ require \bOm{\permp{\inputSize,\memorySize,\blockSize}} parallel I/Os in the CREW atomic PEM model with $P < N/B$ processors. 
\end{theorem}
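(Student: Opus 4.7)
The plan is to invoke Lemma~\ref{theoremGeneralizedPermutingTasksLbPEM} with $\permutationProblem$ instantiated as the proximate neighbors problem. What remains to be shown is that the proximate neighbors problem forces an algorithm to be able to produce at least $(N/(eB))^{cN}$ distinct output block permutations, for some constant $c>0$.

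First I would count the instances. Each legal labeling $\lambda\colon[N]\to[N/2]$ with every fiber of size~$2$ corresponds to an ordered partition of $[N]$ into $N/2$ pairs, so there are exactly $N!/2^{N/2}$ instances. Then I would bound the \emph{coverage} of any fixed output block permutation, i.e., the number of labelings that it solves. If its blocks have even sizes $k_1,\dots,k_p$ summing to $N$, a labeling is compatible exactly when its pairing restricts to a perfect matching inside every block, yielding coverage
\[
(N/2)!\,\prod_{j=1}^{p}\frac{k_j!}{2^{k_j/2}(k_j/2)!};
\]
blocks of odd size or with ``orphan'' atoms contribute zero. A short merging argument, based on $\binom{a+b}{a}/\binom{(a+b)/2}{a/2}>1$, shows that the coverage is maximized when every block is completely full, giving the upper bound $(N/2)!\cdot(B!/(2^{B/2}(B/2)!))^{N/B}$.

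Dividing the instance count by this maximum coverage, the $(N/2)!$ factors cancel and the ratio reduces to the number of perfect matchings on $N$ elements divided by the $(N/B)$-th power of the number of perfect matchings on $B$ elements. Stirling's formula simplifies this to roughly $(N/B)^{N/2}$ up to a $2^{\Theta(N/B)}$ nuisance factor; comparing with $(N/(eB))^{cN}$ via logarithms then confirms the bound with, for instance, $c=1/4$ whenever $N$ is large enough compared to $B$. Lemma~\ref{theoremGeneralizedPermutingTasksLbPEM} then yields the claimed $\bOm{\permp{\inputSize,\memorySize,\blockSize}}$ bound.

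The step I expect to need the most care is the coverage maximization: one must verify that odd-sized, partially empty, or otherwise irregular blocks cannot beat the full-block configuration. The merging argument handles the obvious cases, but the accounting around slack inside blocks must be made airtight. Once this combinatorial bound is in place, the Stirling calculation and the appeal to the prior lemma are routine bookkeeping.
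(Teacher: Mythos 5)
Your proposal is correct in its overall architecture and ends at the same place as the paper — an appeal to Lemma~\ref{theoremGeneralizedPermutingTasksLbPEM} after showing that roughly $(\inputSize/(e\blockSize))^{\inputSize/2}$ distinct output block permutations are needed — but it gets there by a genuinely different counting route. The paper restricts attention to the subfamily of instances in which the first $\inputSize/2$ atoms carry the labels $1,\dots,\inputSize/2$ in order; this makes the instance count exactly $(\inputSize/2)!$ and lets the coverage of a single output block permutation be bounded crudely by $(\blockSize/2)^{\inputSize/2}$ (each block pairs at most $\blockSize/2$ first-half atoms with $\blockSize/2$ second-half atoms, in at most $x!\le x^x$ ways), so no extremal optimization over block sizes is ever needed. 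You instead count all $\inputSize!/2^{\inputSize/2}$ labelings and compute the exact coverage $(\inputSize/2)!\prod_j k_j!/(2^{k_j/2}(k_j/2)!)$, which buys a cleaner ``perfect matchings of $[\inputSize]$ versus perfect matchings of blocks'' interpretation, at the price of having to prove that full blocks maximize coverage.

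That maximization step is where your argument, as stated, has a gap. The inequality $\binom{a+b}{a}/\binom{(a+b)/2}{a/2}>1$ is exactly the superadditivity $\mathrm{PM}(a)\,\mathrm{PM}(b)\le \mathrm{PM}(a+b)$ for the matching numbers $\mathrm{PM}(n)=n!/(2^{n/2}(n/2)!)$, and merging alone cannot respect the capacity constraint $k_j\le\blockSize$: iterating it pushes you toward a single block of size $\inputSize$, whose matching count $\mathrm{PM}(\inputSize)$ is the numerator itself, so the resulting upper bound on coverage is vacuous and the ratio collapses to $1$. What you actually need is the log-convexity of $\mathrm{PM}$ (note $\mathrm{PM}(n+2)/\mathrm{PM}(n)=n+1$ is increasing), which gives either the shifting inequality $\mathrm{PM}(a)\,\mathrm{PM}(b)\le \mathrm{PM}(a-2)\,\mathrm{PM}(b+2)$ for $a\le b\le \blockSize-2$ (push mass toward full blocks without ever exceeding $\blockSize$), or, more directly, $\mathrm{PM}(k)\le \mathrm{PM}(\blockSize)^{k/\blockSize}$ for even $0\le k\le\blockSize$, whence $\prod_j \mathrm{PM}(k_j)\le \mathrm{PM}(\blockSize)^{\inputSize/\blockSize}$ immediately. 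With that substitution your Stirling computation is fine (indeed the ratio is at least $(\inputSize/(e\blockSize))^{\inputSize/2}$, so even $c=1/2$ works and no ``$\inputSize$ large compared to $\blockSize$'' caveat is needed), and the rest of the proof goes through; alternatively, adopting the paper's restriction to the special instance class sidesteps the extremal step entirely.
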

\begin{proof}
In \autoref{appendixProximateNeighbors} we show that any output block permutation of the proximate neighbors problem solves at least~$\frac{({\inputSize}/2)!}{(B/2)^\frac{\inputSize}{2}}$ in block permutations.
The theorem follows from \autoref{theoremGeneralizedPermutingTasksLbPEM} and the observation that $\frac{({\inputSize}/2)!}{(B/2)^\frac{\inputSize}{2}} \geq \left(\frac{\inputSize}{eB}\right)^\frac{\inputSize}{2}$. \qed
\end{proof}
Note, that the bound holds even if a program has full access to the labeling function \labelfunction{} and thus is fully optimized for an input.
The origin of the complexity of the problem rather is permuting the atoms to the output block permutation that solves the problem.

\subsection{Semigroup Evaluation in the PEM Model}
\label{sectionSemigroupLB}
\createvar{semigroup}{S}{}
\createvar{semigroupSet}{S}{}
\createvar{semigroupOperation}{\cdot}{}
\createvar{permutation}{\pi}{f}


Consider the problem of evaluating a very simple type of expressions, namely that of a semigroup, in the PEM model.

\begin{definition}[Semigroup Evaluation]
\label{definitionSemigroupEvaluation} 
Let \semigroup be a semigroup with its associative binary operation $\semigroupOperation: \semigroup\times\semigroup \rightarrow \semigroup$.
The \emph{semigroup evaluation problem} is defined as evaluating the expression $\prod_{i=1}^{\inputSize}a_i$, with $ a_i=x_{\permutation{i}}$, for the array of input atoms $x_i\in\semigroup$ for $1\leq i\leq \inputSize$ and where $\permutation{}$ is a permutation over~$[\inputSize]$.
\end{definition}

To be able to solve the semigroup evaluation problem, algorithms must be able to apply the semigroup operation to atoms.
Thus, we extend the \movePEM model to the \emph{\semigroupPEM} model by the following additional operation: 
if two atoms $x$ and $y$ are in the cache of a processor, a new atom $z=x\semigroupOperation y$ can be created.

We say that a program is correct in the \semigroupPEM if it computes the correct result for any input and any semigroup.


\begin{theorem}
\label{theoremLowerBoundSemiGroup} 
At least one instance of the semigroup evaluation problem of size~\inputSize requires \bOm{\permp{\inputSize,\memorySize,\blockSize}} parallel I/Os in the CREW \semigroupPEM model with $P\leq\frac{\inputSize}{\blockSize}$ processors.
\end{theorem}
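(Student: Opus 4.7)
The plan is to adapt the counting machinery of \autoref{theoremGeneralizedPermutingTasksLbPEM} to the \semigroupPEM, using the free non-commutative semigroup as the worst-case instantiation. First I would fix the semigroup $\semigroup$ to be the free semigroup on generators $g_1,\dots,g_N$ and set the input atoms to be $x_i = g_i$. For any two permutations $\pi_1 \ne \pi_2$ of $[N]$, the products $g_{\pi_1(1)}\cdots g_{\pi_1(N)}$ and $g_{\pi_2(1)}\cdots g_{\pi_2(N)}$ are distinct elements of the free semigroup, since the normal form of such a product is just the left-to-right sequence of generators (associativity collapses different parenthesizations but not different orderings). Hence varying $\pi$ over all $N!$ permutations of $[N]$ yields $N!$ distinct required output atoms.

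Next, each such output atom, deposited at a fixed cell in shared memory at the end of the computation, determines a distinct block permutation of the final configuration. Consequently any correct algorithm must be able to generate at least $N!$ distinct output block permutations across its possible inputs. Using Stirling's bound, $N! \ge (N/e)^N \ge (N/(eB))^{N/2}$ for every $B \ge 1$, so the hypothesis of \autoref{theoremGeneralizedPermutingTasksLbPEM} is satisfied with $c = 1/2$. The lemma then delivers an $\Omega(\permp{N,M,B})$ lower bound for at least half of, and therefore for at least one of, the input instances.

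The main obstacle is that \autoref{theoremGeneralizedPermutingTasksLbPEM} is stated for the \movePEM, whereas the theorem concerns the richer \semigroupPEM, so one must argue that the additional combining operation does not escape the counting argument. Because a program has no conditional statements, its sequence of combinings is input-independent, and because combining performs no transfer between shared memory and a processor's cache, it contributes no new I/O step to charge. The set of configurations reachable after $T$ parallel I/Os is therefore still governed by the $\bO{PT}$ I/O decisions together with the input-independent combining schedule, and the counting bound from the \movePEM\ carries over unchanged. I expect the careful verification of this last step to be the main delicate point, though it introduces no new combinatorial idea beyond the proof of \autoref{theoremGeneralizedPermutingTasksLbPEM}.
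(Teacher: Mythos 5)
Your reduction to \autoref{theoremGeneralizedPermutingTasksLbPEM} breaks down exactly at the step you flag as ``delicate'', and the problem is not a technicality. To get $N!$ required outputs you must distinguish final configurations by the \emph{semantic identity} of a single newly created atom (which word of the free semigroup it represents), i.e.\ by its derivation history. But \autoref{theoremGeneralizedPermutingTasksLbPEM} and its proof (Greiner's counting argument) bound the number of block permutations reachable in $t$ parallel I/Os in the \movePEM, where a configuration is an arrangement of the original, identity-preserving atoms and the count per I/O is roughly $2^{O(\blockSize\log(\memorySize/\blockSize)+\log \inputSize)}$ per processor. Once you count configurations at the semantic level, the reachable-configuration count must also range over all in-cache combining schedules; these are free (no I/O is charged) and they differ across the programs being counted (one program per instance -- input-independence of a single program is irrelevant, since the adversary argument quantifies over all length-$t$ programs). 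A processor holding $\memorySize$ atoms can, between two I/Os, realize on the order of $2^{\Theta(\memorySize\log \memorySize)}$ semantically distinct cache contents (ordered sub-products), so the per-step count inflates and the argument only yields something like $\Omega\bigl(\inputSize\log \inputSize/(P\memorySize\log \memorySize)\bigr)$, which is weaker than \permp{\inputSize,\memorySize,\blockSize} by roughly a factor $\memorySize/\blockSize$ when $\memorySize\gg\blockSize$. Conversely, if you keep the coarse, atom-position notion of block permutation that the lemma actually uses, then different permutations $\pi$ do \emph{not} force distinct output block permutations -- the output is one atom in one cell -- and the lemma's hypothesis is simply not met. Either way the claim that ``the counting bound carries over unchanged'' is unjustified, and no argument is offered for it.

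The paper avoids this trap by not counting at all in the \semigroupPEM: it reduces the proximate neighbors problem to semigroup evaluation over the semigroup $(a,b)\semigroupOperation(c,d)=(a,d)$ on pairs, shows structurally (\autoref{lemmaEverySemigroupOperation}, via the fact that every intermediate result of a correct program is a contiguous sub-product) that any correct program must at some point multiply, inside one cache, a product ending at $a_{\perm{2i-1}}$ with one starting at $a_{\perm{2i}}$ for every $i$, and then normalizes the program so that emitting these pairs as proximate-neighbors output costs only a constant-factor overhead in parallel I/Os; the bound then follows from \autoref{lemmaProximateNeighborsLbPEM}. If you want to rescue a direct counting proof, you would first need a normalization of \semigroupPEM programs that caps the information content of the free combining steps -- which is essentially the structural work your proposal omits.
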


\createvar{iosSemigroupEvaluation}{\ell}{}
\createvar{hardInstance}{I}{}
\createvar{semigroupProblem}{\mathcal{S}}{}
\createvar{proximateProblem}{\mathcal{P}}{}
\createvar{programOf}{\program}{i}
\createvar{semigroupProgram}{\programOf{\semigroupProblem}}{}
\createvar{instructionSequence}{\mathcal{I}}{i}

\def\upP{{\proximateProblem{}}}
\def\upS{{\semigroupProblem{}}}

\createvar{labf}{\lambda}{}
\createvar{instPN}{I^\upP_{\labf}}{}
\createvar{progPN}{P^\upP_{\labf}}{}
\createvar{perm}{\pi}{f}
\createvar{instSE}{I^\upS_{\perm{}}}{}
\createvar{progSE}{P^\upS_{\perm{}}}{}
\createvar{MPN}{\memorySize_{\proximateProblem{}}}{}
\createvar{BPN}{\blockSize_{\proximateProblem{}}}{}
\createvar{MSE}{\memorySize_{\semigroupProblem{}}}{}
\createvar{BSE}{\blockSize_{\semigroupProblem{}}}{}
\createvar{tPN}{t_{\proximateProblem{}}(\inputSize,\MPN,\BPN)}{}
\createvar{tSE}{t_{\semigroupProblem{}}(\inputSize,\MSE,\BSE)}{}
\createvar{inputAtoms}{X}{}

\begin{proof}[Sketch]
Let~\instPN be an instance of the proximate neighbors problem over the input atoms $\inputAtoms=\{x_i|i\in[\inputSize]\}$ with its labeling function~\labf.
We consider an instance~\instSE of the semigroup evaluation problem over the semigroup on the set $\inputAtoms^2$ with the semigroup operation $(a,b)\semigroupOperation(c,d) = (a,d)$, where $a,b,c,d\in \inputAtoms$.
Furthermore, the instance~\instSE is defined over the input atoms $a_i=(x_i,x_i)$, with $1\leq i\leq \inputSize$.
The permutation \perm{} of~\instSE is one of the permutations such that for all $i\in\left[\frac{\inputSize}{2}\right]$, $\{\perm{2i-1},\perm{2i}\} = \labf{}^{-1}(i)$ holds.

Then, the key idea is to write for each application of the semigroup operation $(a,b)\semigroupOperation(c,d)$ in a program solving~\instSE, the pair $\{b,c\}$ as a result for \instPN to the output.
This would yield an efficient program for~\instPN, and therefore yields the lower bound by \autoref{lemmaProximateNeighborsLbPEM}.
We present the complete proof in \autoref{appendixSemigroup}.
\qed
\end{proof}

\subsection{Atomic Edge Contraction in the PEM Model} 
\label{sectionListRankingLB}


\begin{definition}
The input of the {\em atomic edge contraction problem} of size \inputSize consists of atoms $x_i$, $1\leq i \leq \inputSize$, which represent directed edges $e_i$ on a $(\inputSize+1)$-vertex path between vertices $s$ and $t$. 
Initially, the edges are located in arbitrary locations of the shared memory.
The instance is solved if an atom representing the edge $(s,t)$ is created and written to shared memory.
\end{definition}

To prove the lower bound for the atomic edge contraction problem, we extend the \movePEM with an additional operation: 
two atoms representing a pair of edges $(a,b)$ and $(b,c)$ can be removed and replaced by a new atom representing a new edge $(a,c)$.
We call the resulting model {\em \bridgingPEM}.



\begin{theorem} 
\label{lemmaAtomicEdgeContraction}
There is at least one instance of the atomic edge contraction problem of size~\inputSize which requires \bOm{\permp{\inputSize,\memorySize,\blockSize}} parallel I/Os in the CREW \bridgingPEM model with $P\leq\frac{\inputSize}{\blockSize}$ processors.
\end{theorem}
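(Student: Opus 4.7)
My plan is to reduce proximate neighbors to atomic edge contraction, following the blueprint of the semigroup evaluation proof (\autoref{theoremLowerBoundSemiGroup}): construct the input path so that each proximate neighbor label pair corresponds to an adjacent pair of path edges, and then argue that any correct edge contraction program implicitly solves the proximate neighbors instance. Given an instance with atoms $x_1, \ldots, x_N$ and labeling $\labf: [N] \to [N/2]$, I would build the edge contraction instance on the fixed path $s = v_0, v_1, \ldots, v_N = t$ with directed edges $e_j = (v_{j-1}, v_j)$: for each label $k$ with $\labf^{-1}(k) = \{a_k, b_k\}$, reinterpret $x_{a_k}$ as $e_{2k-1}$ and $x_{b_k}$ as $e_{2k}$, preserving the initial memory layout of the atoms verbatim.

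Any program solving this edge contraction instance induces a binary contraction tree whose $N$ leaves are the original path edges and whose $N-1$ internal nodes are contraction operations. For each label $k$, the lowest common ancestor of $e_{2k-1}$ and $e_{2k}$ in this tree is the unique contraction whose middle vertex is $v_{2k-1}$: it merges some super-edge $(v_a, v_{2k-1})$ with $(v_{2k-1}, v_c)$ where $a \leq 2k-2$ and $c \geq 2k$. Thus $N/2$ of the contraction operations are canonically identified with the $N/2$ proximate neighbor label classes. From an edge contraction program with $t$ parallel I/Os, I would derive a program solving the corresponding proximate neighbors instance in $t + \bO{\permp{N,M,B}}$ I/Os by mirroring the I/O schedule on the original atoms and, at each label-$k$ LCA contraction, writing $x_{a_k}$ and $x_{b_k}$ to a common output block. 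Because programs are input-independent instruction sequences, the association between contractions and label classes is precomputed from the fixed structure of the contraction tree. Combined with \autoref{lemmaProximateNeighborsLbPEM}, this yields $t = \bOm{\permp{N,M,B}}$ for at least one instance.

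The main obstacle --- not present in the semigroup proof --- is that the edge contraction operation destroys its two operand atoms, so the original atoms $x_{a_k}$ and $x_{b_k}$ are no longer individually present when the LCA contraction for label $k$ is executed. I would resolve this by using the copy operation inherited from the \movePEM to archive every input atom at the start of the simulation, at a one-time overhead of $\bO{N/(PB)}$ parallel I/Os, and then retrieve the appropriate archive entries at each scheduled output step. Verifying that the combined overhead of the archive phase and the output phase is absorbed by $\bO{\permp{N,M,B}}$ in all relevant parameter regimes is the quantitative heart of the argument, and the detailed bookkeeping would parallel that of \autoref{appendixSemigroup}.
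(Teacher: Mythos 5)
Your instance construction and the structural observation are sound: the path layout mirroring the paper's composite reduction, and the claim that for each label $k$ any correct contraction program must at some point merge a super-edge ending at $v_{2k-1}$ with one starting at $v_{2k-1}$, is exactly the analogue of the paper's Lemma~\ref{lemmaEverySemigroupOperation}. The genuine gap is in the simulation accounting, precisely at the point you flag as "the quantitative heart." At the traced (LCA) contraction the two original atoms $x_{a_k},x_{b_k}$ are \emph{not} in the cache --- only derived super-edge atoms are --- and your archive-retrieval fix does not repair this: the two archived atoms for a given label sit at arbitrary, scattered positions of the archive, so per read I/O of the contraction program (which may trigger up to $\Theta(\blockSize)$ traced contractions even after normalization, and up to $\Theta(\memorySize)$ without it) the simulating program must perform up to that many extra read I/Os to fetch them. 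Summed over the $\inputSize/2$ labels this gathering is itself a proximate-neighbors/permuting task of the full size; it is not $\bO{\inputSize/(P\blockSize)}$ and it is not charged to $t$. Moreover, even granting your claimed total of $t+\bO{\permp{\inputSize,\memorySize,\blockSize}}$, the conclusion $t=\bOm{\permp{\inputSize,\memorySize,\blockSize}}$ does not follow: an additive overhead of $C_1\cdot\permp{\inputSize,\memorySize,\blockSize}$ against a lower bound of $c_2\cdot\permp{\inputSize,\memorySize,\blockSize}$ is vacuous unless $c_2>C_1$, which nothing guarantees. A valid reduction must keep the overhead of the form $\bO{t}+\bO{\inputSize/(P\blockSize)}$, so that every term can be charged to $t$ itself (using that any correct program reads its input), as is done at the end of \autoref{appendixSemigroup}.

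The paper avoids this entirely by factoring the argument through \autoref{theoremLowerBoundSemiGroup}: the hard bookkeeping (doubling the parameters to $\memorySize_{\mathcal{P}}=2\memorySize_{\mathcal{S}}$, $\blockSize_{\mathcal{P}}=2\blockSize_{\mathcal{S}}$ so that each derived atom $(a,d)$ of the semigroup $X^2$ with $(a,b)\cdot(c,d)=(a,d)$ literally \emph{carries} the two originals that must be co-located, plus normalizing the program so operations occur immediately after reads, giving at most $2\blockSize$ output pairs per read) is done once for semigroup evaluation; the edge-contraction bound then follows by a zero-overhead, operation-for-operation replacement of ``contract $(a,b),(b,c)$'' by ``apply the semigroup operation.'' If you want to keep your direct reduction from proximate neighbors, the repair is to drop the archive and instead have the simulating atomic-PEM program carry, alongside each super-edge atom, the two original atoms corresponding to its first and last constituent edges in a doubled-parameter machine; that reproduces the paper's argument in different clothing.
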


\createvar{atomPermutation}{\perm{}}{f}
\createvar{atomicEdgeContractionProblem}{\mathcal{E}}{}
\createvar{instanceAtomicListRanking}{I^{\atomicEdgeContractionProblem}}{}
\createvar{programAtomicListRanking}{P^{\atomicEdgeContractionProblem}}{}

\begin{proof}[Sketch]
An instance \instSE of the semigroup evaluation problem can be reduced to an instance \instanceAtomicListRanking of the atomic edge contraction problem, by defining the atom $x_{\perm{i}}$ of \instanceAtomicListRanking, initially stored at location $\perm{i}$, as \mbox{$e_{\perm{i}}=(\perm{i},\perm{i+1})$}, where \perm{} is the permutation of \instSE. 
The full proof is given in \autoref{appendixAtomicEdgeContraction}.
\end{proof}


\subsection{Randomization and Relation to the List Ranking Problem}
Observe that the expected number of parallel I/Os of a randomized algorithm for an instance is a convex combination of the number of parallel I/Os of programs. 
Combining this observation with the \bOm{\log P} lower bound of \cite{1990KarpRamachandranPRAMListRankLB,2008ArgeEtAlFundamentalPEM} mentioned in \autoref{sectionIntroduction} we obtain:

\begin{theorem}
For the proximate neighbors, semigroup evaluation, and the atomic edge contraction problems, there exists at least one instance that requires at least \bOm{\permp{\inputSize,\memorySize,\blockSize} + \log P} expected parallel I/Os by any randomized algorithm in the corresponding PEM model with $P\leq\frac{\inputSize}{\blockSize}$ processors. 
\end{theorem}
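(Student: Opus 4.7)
The plan is to make the ``convex combination'' observation preceding the theorem precise via Yao's minimax principle. A randomized algorithm is a probability distribution over deterministic programs, so its expected number of parallel I/Os on a fixed instance is a convex combination of deterministic costs. Hence for every distribution $D$ over inputs and every randomized algorithm $R$,
\[
  \max_I \mathbb{E}_R\bigl[\textnormal{cost}(R,I)\bigr]
  \;\geq\; \mathbb{E}_{I\sim D}\mathbb{E}_R\bigl[\textnormal{cost}(R,I)\bigr]
  \;=\; \mathbb{E}_R\Bigl[\mathbb{E}_{I\sim D}\bigl[\textnormal{cost}(R,I)\bigr]\Bigr],
\]
so it suffices to exhibit, for each of the three problems, an input distribution $D$ under which every deterministic program has expected cost \bOm{\permp{\inputSize,\memorySize,\blockSize}}.

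For the proximate neighbors problem I would take $D$ to be uniform over labelings. \autoref{lemmaProximateNeighborsLbPEM}, together with the underlying counting argument of \autoref{theoremGeneralizedPermutingTasksLbPEM}, states that at least half of the labelings are hard for every deterministic program, so the average cost of any deterministic program under $D$ is \bOm{\permp{\inputSize,\memorySize,\blockSize}}. For semigroup evaluation and atomic edge contraction, the reductions of \autoref{sectionSemigroupLB} and \autoref{sectionListRankingLB} map each proximate-neighbor labeling to a concrete instance of the reduced problem in such a way that any program solving the reduced instance also solves the proximate-neighbor one. Consequently the ``at least half'' property transfers to a corresponding distribution $D$ over reduced instances, and the displayed Yao inequality yields, for each problem, a worst-case instance on which every randomized algorithm requires \bOm{\permp{\inputSize,\memorySize,\blockSize}} expected parallel I/Os.

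The $\Omega(\log P)$ term is inherited directly from the information-theoretic PRAM/PEM lower bounds of~\cite{1990KarpRamachandranPRAMListRankLB,2008ArgeEtAlFundamentalPEM}, whose argument (tracking how fast information can reach $P$ processors through $B$-block transfers) applies to randomized algorithms as well and thus gives a worst-case instance requiring $\Omega(\log P)$ expected parallel I/Os. Since the worst case over instances is at least as large as either bound individually, using $\max\{a,b\}\geq (a+b)/2$ delivers the claimed \bOm{\permp{\inputSize,\memorySize,\blockSize}+\log P} bound. The delicate point I expect is the bookkeeping behind ``at least half'' for the semigroup and atomic edge contraction problems: \autoref{theoremLowerBoundSemiGroup} and \autoref{lemmaAtomicEdgeContraction} as stated only guarantee a single hard instance, whereas the Yao argument needs a constant fraction of hard inputs. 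Verifying that each of the two reductions preserves a constant fraction of the hard proximate-neighbor instances is the main nontrivial step; once that is done, the rest of the argument is routine.
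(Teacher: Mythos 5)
Your proposal is sound, but it takes a detour the paper does not need, and the point you flag as the main remaining gap is in fact a non-issue. The paper's proof is exactly the convex-combination remark preceding the theorem: a randomized algorithm, run on one fixed instance, is a probability distribution over deterministic programs, each of which must solve that instance, so its expected number of parallel I/Os is a convex combination of the costs of correct programs for that instance. Since \autoref{lemmaProximateNeighborsLbPEM}, \autoref{theoremLowerBoundSemiGroup} and \autoref{lemmaAtomicEdgeContraction} each provide a single instance on which \emph{every} program needs \bOm{\permp{\inputSize,\memorySize,\blockSize}} parallel I/Os, the expected cost of any randomized algorithm on that same instance is already \bOm{\permp{\inputSize,\memorySize,\blockSize}}; the $\Omega(\log P)$ term is then added by citing \cite{1990KarpRamachandranPRAMListRankLB,2008ArgeEtAlFundamentalPEM}, exactly as you do, and your $\max\{a,b\}\geq(a+b)/2$ gluing matches the paper's (equally implicit) treatment. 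By contrast, you route the argument through the easy direction of Yao's principle with a hard input distribution, and this is what creates your worry that the semigroup and edge-contraction theorems certify only one hard instance whereas your argument would need a constant fraction of hard inputs. But hardness here is a per-instance statement against all programs, so a point mass on one hard instance (or the pushforward of the uniform distribution on hard proximate-neighbor instances, whose image consists only of hard reduced instances, because the reductions of \autoref{sectionSemigroupLB} and \autoref{sectionListRankingLB} turn a fast program for a reduced instance into a fast program for the underlying proximate-neighbor instance) already yields a valid distribution; no counting of how many reduced instances are hard is required, and the ``main nontrivial step'' you defer is precisely what \autoref{theoremLowerBoundSemiGroup} and \autoref{lemmaAtomicEdgeContraction} have already established. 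The only assumption both routes share is that the randomized algorithm is always correct (Las Vegas), so that every program in its support indeed solves the given instance.
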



Although our \semigroupPEM and \bridgingPEM models might seem too restrictive at a first glance.
To the best of our knowledge all current parallel solutions to list ranking utilize pointer hopping, which can be reduced to atomic edge contraction and thus the lower bound applies. 




\section{The Guided Interval Fusion Problem (GIF)}

\label{sectionIPNLB}

\createvar{proximateNeighborsAlgorithm}{A_p}{}


In this section we prove for the GIF problem, which is very similar to the atomic edge contraction problem, a lower bound  of \bOm{\log^2 \inputSize} in the PEM model with parameters $P=\memorySize$ and $\blockSize=\memorySize /2$ for inputs of size $\inputSize = P\memorySize=2^x$ for some~\mbox{$x\in\mathbb{N}$}. 
In contrast to the atomic edge contraction problem, in the GIF problem an algorithm is not granted unlimited access to the permutation \permutation{}.

The chosen parameters of the PEM model complement the upper bounds of \autoref{sectionUpperBounds} at one specific point in the parameter range.
Note that with careful modifications the \bOm{\log^2 \inputSize} bound can even be proven for $\inputSize =\memorySize^{\frac{3}{2}+\varepsilon}$.

\createvar{gifInstance}{\mathcal{G}}{}
\createvar{neighborOf}{n}{f}

\begin{definition}
The \emph{\fusePEM} is an extension of the \movePEM:
Two atoms $x$ and $y$ representing closed intervals $I_x$ and $I_y$, located in one cache, can be \emph{fused} if $I_x\cap I_y \neq \emptyset$.
Fusing creates a new atom $z$ representing the interval $I_z=I_x\cup I_y$. 
We say $z$ is derived from $x$ if $z$ is the result of zero or more fusing operations starting from atom~$x$.
\end{definition}

\begin{definition}
The \emph{guided interval fusion problem} (GIF) is a game between an algorithm and an adversary, played on an \fusePEM.
The algorithm obtains a GIF instance \gifInstance in the first \inputSize cells of the shared memory, containing \inputSize uniquely named atoms $x_i$, $1\leq i \leq\inputSize$.
Each initial atom $x_i$ represents the (invisible to the algorithm ) closed interval $I_{x_i}=[k-1,k]$ for $k=\permutation{i}$ with $1\leq k \leq \inputSize$.

The permutation $\permutation{}$ is gradually revealed by the adversary in form of \emph{boundaries} $p=(i,j)$ meaning that the (initial) atoms $x_i$ and $x_j$ represent neighboring intervals ($\permutation{j}=\permutation{i}+1$).
We say that the \emph{boundary} point $p$ \emph{for} $x_i$ and $x_j$ is \emph{revealed}. 
The adversary must guarantee that at any time, for all existing atoms, at least one boundary is revealed.
The game ends as soon as an atom representing~$[0,\inputSize]$ exists.
\end{definition}

Note the following: 
The algorithm may try to fuse two atoms, even though by the revealed boundaries this is not guaranteed to succeed.
If this attempt is successful because their intervals share a point, a new atom is created and the algorithm solved a boundary.
We call this phenomenon a \emph{chance encounter}.
Since the \fusePEM extends the \movePEM, copying of atoms is allowed.

For the lower bound, we assume that the algorithm is omniscient. 
More precisely, we assume there exists a central processing unit, with unlimited computational power, that has all presently available information on the location of atoms and what is known about boundaries. 
This unit can then decide on how atoms are moved and fused. 

Thus, as soon as all boundary information is known to the algorithm, the instance is solvable with \bO{\log \inputSize} parallel I/Os:
The central unit can virtually list rank the atoms, group the atoms by rank into $P$ groups, and then by permuting move to every processor \bO{\memorySize} atoms which then can be fused with \bO{1} I/Os to the solving atom.

\createvar{binaryTree}{\mathcal{T}}{}
\createvar{binaryTreeOf}{\binaryTree}{i}

Hence, the careful revealing of the boundary information is crucial. 
To define the revealing process for GIF instances, the atoms and boundaries of a GIF instance \gifInstance of size \inputSize are related to a perfect binary tree \binaryTreeOf{\gifInstance}.
The tree \binaryTreeOf{\gifInstance} has~\inputSize leaves, $\inputSize -1$ internal nodes and every leaf is at distance $h=\log \inputSize$ from the root. 
More precisely, each leaf $i\in[\inputSize]$ corresponds to the atom representing the interval $[i-1,i]$.
And each internal vertex $v_p$ corresponds to the boundary~\mbox{$p=(i,j)$} where $i$ corresponds to the rightmost leaf of its left subtree, and $j$ to the leftmost leaf of the right subtree.
The levels of \binaryTreeOf{\gifInstance} are numbered bottom up: the leaves have level $1$ and the root vertex level $\log\inputSize$ (corresponding to the revealing order of boundaries).

The protocol for boundary announcement, shows for a random GIF instance~\gifInstance, that a deterministic algorithm takes \bOm{\log^2\inputSize} parallel I/Os.

\begin{definition}
\label{definitionAnnouncingGuide}
	The tree \binaryTreeOf{\gifInstance} is the \emph{guide} of \gifInstance if boundaries are revealed in the following way.
        Let $x$ be an atom of \gifInstance representing the interval~$I=[a,b]$.
        If neither of the boundaries  $a$ and $b$ are revealed, the boundary whose node in \binaryTreeOf{\gifInstance} has smaller level, is revealed. 
        If both have the same level, $a$ is revealed. 
\end{definition}

Note that for the analysis it is irrelevant how to break ties (in situations when the two invisible boundaries have the same level).
By the assumption that the algorithm is omniscient, when $p$ is revealed, immediately all intervals having $p$ as boundary know that they share this boundary.
Thus, the guide ensures that at any time each atom knows at least one initial atom with which it can be fused.

A node $v\in\binaryTreeOf{\gifInstance}$ is called \emph{solved}, if there is an atom of \gifInstance representing an interval that contains the intervals of the leaves of the subtree of~$v$.
The boundary $p$ is only revealed by the guide if at least one child of $v_p$ is solved. 

An easy (omniscient) algorithm solving a GIF instance can be implemented: 
in each of $O(\log \inputSize)$ rounds, permute the atoms such that for every atom there is at least one atom, known to be fuseable, which resides in the same cache. 
Fuse all pairs of neighboring atoms, reducing the number of atoms by a factor of at least~2, revealing new boundaries.
Repeat permuting and fusing until the instance is solved.
Because the permuting step can be achieved with~\bO{\log \inputSize} parallel I/Os, this algorithm finishes in~\bO{\log^2 \inputSize} parallel I/Os.
Note that solving a boundary resembles bridging out one element of an independent set in the classical list ranking scheme.  
Thus, most list ranking algorithms use information as presented to an algorithm solving a GIF instance \gifInstance guided by \binaryTreeOf{\gifInstance}.
Hence, this natural way of solving a GIF instance can be understood as solving in every of the $\log\inputSize$ rounds a proximate neighbors instance, making the \bOm{\log^2\inputSize} lower bound reasonable.

\createvar{levelOfTracedAtoms}{k}{}
\createvar{epsStages}{\varepsilon_s}{}
\createvar{numberStages}{\frac{2\log\memorySize}{16}}{}
\createvar{varStages}{s}{}

In the following we prove the lower bound for GIF by choosing~\levelOfTracedAtoms and showing that ~$\varStages=\levelOfTracedAtoms-1=\bO{\log \inputSize}=\bO{\log\memorySize}$ \emph{progress stages} are necessary to solve all nodes $W$ of level \levelOfTracedAtoms of \binaryTreeOf{\gifInstance} (thus, $|W|=2^{h+1-\levelOfTracedAtoms}$). 
For each stage we show in \autoref{lemmaGIFProgress} that it takes \bOm{\log\inputSize} parallel I/Os to compute.

\createvar{configuration}{C}{}
\createvar{configurationAt}{\configuration}{e}
\createvar{memoryOf}{M}{i}
\createvar{blockAt}{B}{i}
\createvar{memoryMultiplier}{c_m}{}
\createvar{configurationOfAt}{\configuration}{ie}
\createvar{configurationOf}{\configuration}{i} 

To measure the progress of a stage, configurations of \fusePEM machines are used.
The \emph{configuration} \configurationAt{t} after the \fusePEM machine performed $t$ I/Os followed by fusing operations consists of sets of atoms.
For each cache and each block of the shared memory, there is one set of atoms.

\createvar{boundaries}{\mathcal{B}}{}
\createvar{boundariesOf}{\boundaries}{i}

For~$e\in W$, let $T_e$ be the subtree of $e$ in \binaryTreeOf{\gifInstance}, and \boundariesOf{e} be all boundaries in $T_e$. 
The progress measure towards solving~$e$ is the highest level of a solved boundary in~\boundariesOf{e}.
More precisely, $T_e$ is \emph{unsolved} on level~$i$, if all boundaries of level~$i$ of~\boundariesOf{e} are unsolved. 
Initially every $T_e$ is unsolved on level $2$. 
The solved level increases by one at a time if only revealed boundaries are solved, but chance encounters may increase it faster.


\createvar{minLevelSize}{2^{\frac{15}{16}h}}{}

The execution of a deterministic algorithm~$A$ defines the following \varStages progress stages:
Let $\varStages=\numberStages$ 
and $X=\frac{|W|}{\varStages}$.
In each stage~$1 < i \leq \varStages$, 
 at least $X$ elements increase their level to~$i$.
Over time, the number of elements that are unsolved on level~$i$ decreases, and we define $t_i$ to be the last time where in $\configurationAt{t_i}$ the number of elements of~$W$ that are unsolved on level~$i+1$ is at least $|W|-iX$.
Further, let~$W_i$ be the elements (at least $X$ of them) that in stage~$i$ get solved on level~$i$ or higher (in the time-frame from $t_{i-1}$ to $t_i+1$).
We choose~$\levelOfTracedAtoms-1 = \frac{h}{16}$ such that $X = \frac{2^{h+1-k}}{\varStages}\geq 2^\frac{15h}{16}/\varStages = M^\frac{15}{8}/\varStages > M^\frac{7}{4}$ because $\varStages = \numberStages < M^\frac{1}{8}$ for $M\in\mathbb{N}$.


In the beginning of stage~$i$, for each $v \in W_i$ the level of~$v$ is at most~$i-1$, and hence all level~$i$ nodes are not announced to the algorithm. 
Let $P_i$ be the set of boundaries for which progress is traced:
For every $e\in W_i$, there is a node~$v_{p_e}$ of level~$i$ with boundary~$p_e$ that is solved first (brake ties arbitrarily). 
Then $P_i$ consists of those boundaries.
We define~$a_e$ and $b_e$ to be the two level~1 atoms (original intervals) defining the boundary~$p_e$.
Then all intervals having boundary~$p_e$ are derived of $a_e$ or~$b_e$.
Solving the boundary~$p_e$ means fusing any interval derived of $a_e$ with any interval derived of $b_e$. 
Furthermore a traced boundary is considered solved if in its interval (the one corresponding to an element of~$W$) a chance encounter solves a boundary of level greater than~$i$\todo{too edgy}.

To trace the progress of the algorithm towards fusing the atoms of one stage, we define the graph~$G_t^i=(V,E_t^i)$ from the configuration~\configurationAt{t_i+t}.
There is one vertex for each cache and each block of the shared memory (independent of~$t$).
There is an edge (self-loops allowed) $\{u,v\} \in E_t^i$ if for some~$e\in W_i$ some atom derived of~$a_e$ is at~$u$ and some atom derived of~$b_e$ is at~$v$ or vice versa.
The \emph{multiplicity} of an edge counts the number of such~$e$.
The multiplicity of the graph is the maximal multiplicity of an edge.

Note that solving a node~$v_e$ requires that it counts as a self-loop somewhere.
Hence the sum of the multiplicities of self-loops are an upper bound on the number of solved nodes, and for the stage to end, i.e., at time $t_{i+1}+1$, the sum of the multiplicities of loops must be at least~$X$.
After each parallel I/O chance encounters may happen.
Thus, the number of chance encounters is given by~$P$ times the  multiplicity of self-loops at the beginning of the stage.


We say that two nodes of \binaryTreeOf{\gifInstance} are indistinguishable if they are on the same level and exchanging them could still be consistent with the information given so far.
Let~$l_e$ (derived of~$a_e$) and $r_e$ (derived of~$b_e$) be the two children of~$v_e \in P_i$.
By definition, at time~$t_i$ both $l_e$ and~$r_e$ are unsolved and hence~$v_e$ is not revealed. 

Boundaries corresponding to nodes of level higher than \levelOfTracedAtoms may be announced or solved (not only due to chance encounters). 
To account for that, we assume that all such boundaries between the intervals corresponding to~$W$ are solved.
Hence the algorithm is aware of the leftmost and rightmost solved interval belonging to these boundaries, and this may extend to other intervals by revealed boundaries.
Only the nodes of level~$i$ that correspond to this leftmost or rightmost interval might be identifiable to the algorithm, all other nodes of level~$i$ are indistinguishable. 
Because $i<\levelOfTracedAtoms$, for all traced pairs $a_e,b_e$ at least one of the elements belongs to this big set of indistinguishable nodes.
We mark identifiable nodes. 
Hence, at stage $i$ the algorithm has to solve a random matching of the traced pairs where all marked nodes are matched with unmarked ones (and unmarked ones might be matched with marked or unmarked ones). 

\createvar{nrPorts}{|W_i|}{}
\createvar{locSize}{M}{}
\createvar{multiplicity}{m}{i}
\createvar{epsMultiplicity}{\varepsilon_m}{}

The next lemma derives a high-probability upper bound on the multiplicity of a graph. 
The full proof using the Hoeffding inequality is presented in \autoref{appendixGIF}.

\begin{lemma}\label{lemmaMultiplicity}
	Consider a deterministic GIF algorithm operating on a uniformly chosen permutation~$\pi$ defining \binaryTreeOf{\gifInstance} for the GIF instance \gifInstance. 
	Let $p(i,\memorySize)$ be the probability that $G^i_0$ has multiplicity at most~$\memorySize^{\frac{5}{8}}$ (where $P,\inputSize,t_j$, and \levelOfTracedAtoms  depend on parameter~\memorySize).
	Then there is a~$\memorySize^\prime$ such that for all $\memorySize\geq \memorySize^\prime$ and for each~$i \leq \levelOfTracedAtoms$ it holds $p(i,\memorySize) \geq  1-\frac{1}{\memorySize^2}$.
\end{lemma}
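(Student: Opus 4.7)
The plan is a concentration-plus-union-bound argument on the edges of $G^i_0$. First I would condition on the algorithm's view at time $t_i$: the history of I/Os, the revealed boundaries, and hence the configuration $\configurationAt{t_i}$, the sets $W_i$ and $P_i$, and the marked-versus-unmarked partition of the level-$i$ nodes. All these are determined by the algorithm's deterministic actions together with the announced portion of $\permutation{}$, so conditioning on the view leaves a uniform distribution on permutations consistent with it as residual randomness. Because $|W_i|\ge X>\memorySize^{7/4}$ is deterministic, a worst case over views is well defined.

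For each non-empty location $u$ (a cache or a shared-memory block) let $A(u)$ be the set of initial atoms that have a derived copy at $u$ in $\configurationAt{t_i}$; since each location stores at most $\memorySize$ atoms, $|A(u)|\le\memorySize$. Under the residual distribution the traced pairs $\{a_e,b_e\}_{e\in W_i}$ are determined by the uniformly random assignment of initial atoms to the indistinguishable leaves of \binaryTreeOf{\gifInstance}, and by construction each such pair has at least one endpoint in an indistinguishable set of size at least $2|W_i|-O(1)$. For a fixed edge $\{u,v\}$, the multiplicity $Y_{uv}$ counts traced pairs with one endpoint in $A(u)$ and the other in $A(v)$. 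A first-moment computation gives $\mathbb{E}[Y_{uv}]\le |A(u)|\cdot|A(v)|/|W_i|\le \memorySize^{2}/\memorySize^{7/4}=\memorySize^{1/4}$, polynomially smaller than the threshold $\memorySize^{5/8}$.

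Next I would write $Y_{uv}=\sum_{e\in W_i}\mathbf{1}[\{a_e,b_e\}\text{ crosses }(A(u),A(v))]$ and invoke a Hoeffding/Chernoff-type bound. Uniform random matchings satisfy negative association, so Chernoff applies to these indicators and yields $\Pr[Y_{uv}\ge\memorySize^{5/8}]\le\exp(-\Omega(\memorySize^{5/8}))$. Because one may assume the algorithm performs at most $\log^2\inputSize$ parallel I/Os before $t_i$ (otherwise the overall lower bound of the section is already established), the number of non-empty locations is at most $P+\inputSize/\blockSize+t_i\cdot P=\operatorname{poly}(\memorySize)$, and hence $G^i_0$ has $\operatorname{poly}(\memorySize)$ edges. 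A union bound then gives $1-p(i,\memorySize)\le\operatorname{poly}(\memorySize)\cdot\exp(-\Omega(\memorySize^{5/8}))\le\memorySize^{-2}$ for every $\memorySize$ above some threshold $\memorySize'$.

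The main obstacle is formalising the probabilistic setup so that the traced pairs really do behave as a uniform random matching on the indistinguishable leaves. Both $W_i$ and $P_i$ are defined by looking at the algorithm's behaviour \emph{after} time $t_i$, so naively they depend on boundaries announced later and are therefore not measurable with respect to the view at time $t_i$. The cleanest resolution is a worst-case analysis: fix any view-measurable family of candidate pairs of the required cardinality and apply the bound above to each such candidate; since the actual $(W_i,P_i)$ must be one of them, the union-bound conclusion survives. The uniformity of the residual distribution of $\permutation{}$ required by this step is exactly the content of the "indistinguishable" terminology developed in \autoref{sectionIPNLB}.
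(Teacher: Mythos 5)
Your core argument is the same skeleton as the paper's proof: condition on the view at time $t_i$, observe that the residual randomness is a uniform matching of the (partially indistinguishable) level-$i$ subtrees, bound the expected per-edge crossing count by $|A(u)|\cdot|A(v)|/\Omega(|W_i|)\le M^{2}/M^{7/4}=M^{1/4}$, get a superpolynomially small per-edge tail probability, and union-bound over the $\mathrm{poly}(M)$ cache/block pairs (using that more than $\log^2 N$ I/Os would already prove the theorem). One deviation is that you invoke negative association of the crossing indicators for the constrained matching (marked nodes may only be matched to unmarked ones) to apply Chernoff; this is plausible but not an off-the-shelf fact here, since the indicators are not monotone functions of disjoint blocks of a permutation. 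The paper avoids this entirely by exposing the matching partner of each of the at most $M$ traced atoms at $u$ one at a time, bounding every conditional success probability by $p'=M/\bigl(\tfrac12(|W_i|-2M)\bigr)$, dominating by $M$ i.i.d.\ Bernoulli trials, and applying Hoeffding, which gives a per-edge failure probability $e^{-\Omega(M^{1/4})}$ -- weaker than your claimed $e^{-\Omega(M^{5/8})}$ but amply sufficient. (Also, the indistinguishable pool is only guaranteed to have size about $|W_i|$, not $2|W_i|-O(1)$: up to half of the traced subtrees can be marked; this only changes constants.)

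The genuine gap is your last paragraph. You correctly flag that $W_i$ and the traced boundaries $p_e$ are determined by the algorithm's behaviour \emph{after} $t_i$ and hence are not measurable in the view at time $t_i$ (the paper is indeed informal about this), but the proposed repair -- applying the per-family bound to every view-measurable candidate family of the required cardinality and union-bounding, "since the actual $(W_i,P_i)$ must be one of them" -- cannot work quantitatively. There are at least $2^{\Omega(M^{7/4})}$ choices of $W_i\subseteq W$ (and, on top of that, up to $2^{\levelOfTracedAtoms-i}$ choices of the traced level-$i$ boundary inside each $T_e$), while the per-family failure probability is only $e^{-\Theta(M^{5/8})}$ (or $e^{-\Theta(M^{1/4})}$ with the Hoeffding version), so this union bound exceeds $1$ by an enormous margin; and a plain worst case over view-measurable families does not cover the actual, non-measurable one. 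The standard way out is domination by a view-measurable majorant: for fixed $\{u,v\}$, the actual multiplicity is at most the number of level-$i$ boundaries below \emph{any} element of $W$ whose two defining leaf atoms have derived copies at $u$ and $v$ respectively; since each $T_e$ contains at most $2^{\levelOfTracedAtoms-i}=O(M^{1/8})$ level-$i$ boundaries and $|W|\le s\,|W_i|$ with $s=O(\log M)$, the expectation of this majorant is still $O(M^{3/8}\log M)$, polynomially below the threshold $M^{5/8}$, and the same concentration and union bound over edges go through. Without such a domination step (or an equivalent treatment, as in the paper, where the two memory locations are handled as arbitrary adversarial sets of at most $M$ traced atoms), your argument as written does not close.
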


Fundamental insights on identifying two pairs of a $K_4$ show that the progress achieved with one I/O can not be too large.
A full proof is given in \autoref{appendixGIF}.

\begin{lemma}\label{lemmaQuadruple}
  If the graph $G^i_t$ has multiplicity at most~$m$, then $G^i_{t+1}$ has multiplicity at most~$4m$.
\end{lemma}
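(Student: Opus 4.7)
The plan is to track how atoms can redistribute during a single parallel I/O followed by local fusing, and show that any node in the post-step configuration is ``fed'' at time $t$ by at most two source nodes. From this bound, the $4m$ blow-up will follow by a straightforward counting argument on ordered pairs of source nodes.

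First, I would examine the update rules of a parallel I/O step under the CREW semantics: each cache reads from at most one shared-memory block, and each shared-memory block receives a write from at most one cache. Fusing, by contrast, is an entirely local operation inside a single cache; it can merge two co-located atoms but never transports an atom to a new location. Combining the two observations, I would assign to every vertex $v$ of $G^i_t$ (cache or block) a \emph{source set} $S_v$ consisting of $v$ itself together with the (at most one) partner node that sends atoms to $v$ during the step. Hence $|S_v|\le 2$ for every vertex.

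Next, for a fixed $e\in W_i$ I would verify that every atom present at node $v$ at time $t+1$ that is derived from $a_e$ (respectively from $b_e$) traces back to an atom already derived from $a_e$ (respectively from $b_e$) that was present at some $v'\in S_v$ at time $t$. This uses the fact that fusing only extends the ``derived from'' relation and cannot spontaneously produce a derivation from a previously unrelated initial atom. With this in hand, for any edge $\{u,v\}$ of $G^i_{t+1}$ (self-loops included) the values of $e$ contributing to its multiplicity can each be charged to an ordered pair $(u',v')\in S_u\times S_v$ witnessing the contribution in $G^i_t$. Since there are at most $|S_u|\cdot|S_v|\le 4$ such pairs and each of the corresponding edges $\{u',v'\}$ of $G^i_t$ has multiplicity at most $m$ by hypothesis, the multiplicity of $\{u,v\}$ in $G^i_{t+1}$ is at most $4m$.

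The main subtlety, which I would treat carefully, is the interaction between fusing and the ``derived from'' relation. A fusion of, say, an $a_e$-derived atom with a $b_{e'}$-derived atom produces a new atom that is simultaneously derived from $a_e$ and $b_{e'}$, so the sets recording $a_e$-derived and $b_{e'}$-derived atoms at a single cache may both grow during the same step. The point to check is that all such enlargements occur only at that one cache, and only from derivations already witnessed there before the fuse; fusing never creates a new witness at a new vertex. Once this invariant is formalized, the pigeonhole step giving the $4m$ bound is routine, and it is tight up to the factor that accounts for the at most two sources per endpoint.
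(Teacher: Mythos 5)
Your proof is correct and follows essentially the same route as the paper's: your source sets $S_u,S_v$ of size at most two (each vertex plus its unique read/write partner under CREW) correspond exactly to the paper's uniting of each processor node with the single block it reads, and charging each traced pair to one of the at most $|S_u|\cdot|S_v|\le 4$ edges of $G^i_t$ is the paper's observation that a merged edge can stem from at most four former edges (the $K_4$ picture). The only cosmetic difference is that you handle reads and writes uniformly via source sets and make the fusing/``derived-from'' invariant explicit, whereas the paper argues the write case (node splitting, no increase) and the read case (node merging, factor $4$) separately.
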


By the two previous lemmas we obtain the following result, which is proven in \autoref{appendixGIF}.

\begin{lemma}
	\label{lemmaGIFProgress}
  Let~$A$ be an algorithm solving an GIF instance \gifInstance guided by \binaryTreeOf{\gifInstance} of height~$h$ traced at level~$\levelOfTracedAtoms -1=h/16$.
  Each progress stage $j<\varStages=\levelOfTracedAtoms-1$ of $A$, assuming $t_j<\log^2 M$, takes time $t_j-t_{j-1}$=\bOm{\log M}.
\end{lemma}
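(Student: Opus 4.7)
The plan is to bootstrap Lemmas~\ref{lemmaMultiplicity} and~\ref{lemmaQuadruple} into the time lower bound by tracking the total self-loop mass of $G_t^j$ throughout the stage. Recall the stage cannot end before the total self-loop multiplicity reaches $X>M^{7/4}$, while Lemma~\ref{lemmaQuadruple} constrains per-edge growth to a factor of $4$ per parallel I/O; so it is enough to show that starting from a moderate initial multiplicity the self-loop mass needs $\Omega(\log M)$ factor-$4$ doublings to catch up with $X$.

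First, I would condition on the event that $G_0^j$ has multiplicity at most $M^{5/8}$, which by Lemma~\ref{lemmaMultiplicity} holds with probability at least $1-1/M^2$. The hypothesis $t_j<\log^2 M$ bounds the number of stages started by time $t_j$ by $\log^2 M$, so a union bound keeps the overall failure probability at $O(\log^2(M)/M^2)=o(1)$. This is sufficient to obtain the claim in expectation (and hence on at least one instance) through $\mathbb{E}[\,t_j-t_{j-1}\,]\ge(1-o(1))\cdot\Omega(\log M)$.

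Second, under this conditioning, iterating Lemma~\ref{lemmaQuadruple} yields that after $t$ further parallel I/Os every edge of $G_t^j$, in particular every self-loop, has multiplicity at most $M^{5/8}\cdot 4^t$. The vertex set of $G_t^j$ is of polynomial size in $M$: there are $P=M$ cache-vertices and at most $O(N/B + P\cdot t_j)=O(M\log^2 M)$ shared-memory block-vertices that can ever be touched, using the hypothesis $t_j<\log^2 M$ and the parameters $N=PM=M^2$, $B=M/2$. Hence the total self-loop mass is at most $O(M\log^2 M)\cdot M^{5/8}\cdot 4^t$. Combining with the ending condition $\sum\geq X>M^{7/4}$ gives $4^t=\Omega(M^{1/8}/\log^2 M)$, i.e., $t=\Omega(\log M)$, which is exactly the claim $t_j-t_{j-1}=\Omega(\log M)$.

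The main subtlety I anticipate is verifying that chance encounters and solved boundaries of level higher than $k$ cannot circumvent the argument. Chance encounters only fuse atoms already residing in a common cache, so they are already absorbed by the factor-$4$ per-I/O growth of Lemma~\ref{lemmaQuadruple}, rather than providing a separate shortcut to self-loop mass. For boundaries at levels above~$k$ that propagate back down to level~$i$, the construction of $W_i$ and $P_i$ above was arranged so that every such resolution still corresponds to a self-loop of $G_t^j$ counted in the sum. Once these bookkeeping points are nailed down, the three displayed inequalities above deliver the lemma.
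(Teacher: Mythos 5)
Your proposal follows the paper's own argument almost line for line: condition on the event of Lemma~\ref{lemmaMultiplicity} that $G^i_0$ has multiplicity at most $M^{5/8}$, amplify by a factor $4$ per parallel I/O via Lemma~\ref{lemmaQuadruple}, bound the number of vertices of the graph by $O(P\log^2 M)$ using the hypothesis $t_j<\log^2 M$, and compare the resulting total self-loop mass $O(M\log^2 M\cdot M^{5/8}\cdot 4^t)$ with the stage-ending threshold $X>M^{7/4}$ to get $4^t=\Omega(M^{1/8}/\log^2 M)$ and hence $t=\Omega(\log M)$. The exponent arithmetic matches the paper exactly (the paper's union bound over the stages is deferred to the subsequent lemma rather than being done here, but that is immaterial).

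The one place where your argument does not go through as stated is the treatment of chance encounters. Your claim that ``every such resolution still corresponds to a self-loop of $G^i_t$ counted in the sum'' is not consistent with the construction: the edges of $G^i_t$ only record co-location of atoms derived from the specific traced pair $a_e,b_e$ of the level-$i$ boundary $p_e\in P_i$, whereas the definition of the progress stage also declares a traced boundary solved when a chance encounter fuses across some boundary of level \emph{greater} than $i$ inside $e$'s interval --- such a fusion involves atoms derived from a different pair of leaves and produces no self-loop in $G^i_t$ at all. The first half of your remark (chance encounters across $p_e$ itself require the two derived atoms in one cache, hence a self-loop) is fine; it is these higher-level chance encounters that escape the self-loop count. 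The paper closes exactly this hole by a separate additive term: it bounds the chance encounters per parallel I/O by $P$ times the self-loop multiplicity at the start of the stage, so the number of solved traced pairs after $t$ further I/Os is at most $M^{5/8}P(4^t+t)\log^2 M$ rather than $M^{5/8}P\,4^t\log^2 M$; since $4^t+t>M^{1/8}/\log^2 M$ still forces $t=\Omega(\log M)$, the conclusion is unchanged. So your proof needs this extra accounting term (or an equivalent argument) to be complete, but the fix is local and does not alter the structure or the final bound.
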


There are \bO{\log\inputSize} stages, each taking at least \bOm{\log\inputSize} I/Os, yielding with a union bound over all $G^i_0$ for all progress stages $i<s$:

\begin{lemma}
	Consider a deterministic GIF algorithm operating on a uniformly chosen permutation~$\pi$ defining \binaryTreeOf{\gifInstance} for the GIF instance \gifInstance. 
	Then there is a~$\memorySize^\prime$ such that for all $\memorySize\geq \memorySize^\prime$, solving \gifInstance in the \fusePEM takes with high probability ($p>1-1/\memorySize$) at least \bOm{\log^2 \inputSize} parallel I/Os.
\end{lemma}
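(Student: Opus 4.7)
The plan is to combine the three preceding lemmas by summing over the progress stages and taking a single union bound. First I would fix the parameter regime under consideration: since $P=\memorySize$, $\blockSize = \memorySize/2$, and $\inputSize = P\memorySize = \memorySize^2$, we have $\log \inputSize = 2\log \memorySize$, so $\bOm{\log \memorySize}=\bOm{\log \inputSize}$, and the number of progress stages is $\varStages = (2\log \memorySize)/16 = \Theta(\log \inputSize)$.

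Next I would dispatch the ``easy'' case: if the algorithm ever performs more than $\log^2 \memorySize$ parallel I/Os, the claimed bound $\bOm{\log^2 \inputSize}$ is immediate. So I may assume that $t_j < \log^2 \memorySize$ for every $j < \varStages$, which is exactly the precondition of \autoref{lemmaGIFProgress}. For the probabilistic step, \autoref{lemmaMultiplicity} guarantees that for each individual stage $i < \varStages$, the graph $G^i_0$ has multiplicity at most $\memorySize^{5/8}$ with probability at least $1 - 1/\memorySize^2$ (over the random permutation $\pi$ defining $\binaryTreeOf{\gifInstance}$). Since $\varStages < \memorySize^{1/8}$, a union bound over the $\varStages$ stages shows that all of the graphs $G^i_0$ simultaneously meet the multiplicity bound with probability at least
\[
 1 - \frac{\varStages}{\memorySize^2} \;\geq\; 1 - \frac{1}{\memorySize^{15/8}} \;>\; 1 - \frac{1}{\memorySize}
\]
for sufficiently large $\memorySize$, which sets the threshold $\memorySize'$.

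On this good event the precondition of \autoref{lemmaGIFProgress} is met at every stage, so $t_j - t_{j-1} = \bOm{\log \memorySize}$ for each $j<\varStages$. Telescoping over the $\varStages = \Theta(\log \inputSize)$ stages then yields $t_{\varStages} \geq \sum_{j=1}^{\varStages - 1}(t_j - t_{j-1}) = \varStages \cdot \bOm{\log \memorySize} = \bOm{\log^2 \inputSize}$, which is the claim.

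The step I expect to require the most care is not any single inequality but the clean handling of the conditional hypothesis ``$t_j < \log^2 \memorySize$'' in \autoref{lemmaGIFProgress}: one has to argue that either the algorithm has already paid $\bOm{\log^2 \inputSize}$ I/Os before some stage and the claim is trivial, or the hypothesis holds for every stage and the telescoping argument applies. Coupling this dichotomy with the union bound -- while being careful that probabilities are taken only over the random choice of $\pi$ (the algorithm being deterministic) -- is the main bookkeeping hurdle; the rest is a direct concatenation of the lemmas already proved.
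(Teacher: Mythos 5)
Your proposal is correct and follows essentially the same route as the paper: a union bound over the multiplicity bounds for all graphs $G^i_0$ across the $\varStages=\Theta(\log\inputSize)$ stages (using $\varStages<\memorySize^{1/8}$ so the failure probability stays below $1/\memorySize$), combined with the per-stage $\bOm{\log\memorySize}$ bound of \autoref{lemmaGIFProgress} and telescoping, with the dichotomy on $t_j<\log^2\memorySize$ handling the case where the algorithm has already spent $\bOm{\log^2\inputSize}$ I/Os. The paper leaves these bookkeeping steps implicit in a single sentence; your write-up just makes them explicit.
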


\createvar{alreadySolved}{U}{i}
\createvar{toBeSolved}{T}{i}
\createvar{notToBeSolved}{S}{i}

By Yao's principle \cite{1977YaosPrinciple} this can be transferred to randomized algorithms: 

\begin{theorem}
  The expected number of parallel I/Os to solve a GIF instance of size~$\inputSize=P\memorySize$ on an \fusePEM with $\memorySize=P$ is \bOm{\log^2 \inputSize}.
\end{theorem}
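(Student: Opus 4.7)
The plan is to derive the theorem from the preceding high-probability lemma via a routine application of Yao's minimax principle, where the only slightly delicate point is converting the ``with high probability'' statement into a bound on the expectation.

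First, I would fix the input distribution $\mathcal{D}$ to be the uniform distribution over the permutations $\pi$ that define the tree $\binaryTreeOf{\gifInstance}$ of the GIF instance, exactly as in the preceding lemma. For any deterministic algorithm $A$, let $T_A(\gifInstance)$ denote the number of parallel I/Os it performs on instance $\gifInstance$. The previous lemma guarantees a constant $c>0$ and a threshold $\memorySize^\prime$ such that for all $\memorySize\geq \memorySize^\prime$,
\[
	\Pr_{\gifInstance\sim\mathcal{D}}\!\left[T_A(\gifInstance)\geq c\log^2 \inputSize\right] \;\geq\; 1-\tfrac{1}{\memorySize}.
\]
Since $T_A$ is non-negative, Markov's inequality in its reverse form yields
\[
	\mathop{\mathbb{E}}_{\gifInstance\sim\mathcal{D}}\!\left[T_A(\gifInstance)\right] \;\geq\; \left(1-\tfrac{1}{\memorySize}\right)\cdot c\log^2 \inputSize \;=\; \bOm{\log^2 \inputSize}.
\]
This holds uniformly for \emph{every} deterministic algorithm $A$ valid in the \fusePEM.

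Next, I would invoke Yao's principle~\cite{1977YaosPrinciple}: the worst-case expected cost of any randomized algorithm (viewed as a distribution over deterministic algorithms) is lower bounded by the best expected cost of a deterministic algorithm against any fixed input distribution. Formally, for any randomized algorithm $R$,
\[
	\max_{\gifInstance} \mathop{\mathbb{E}}_{R}[T_R(\gifInstance)] \;\geq\; \min_{A \text{ deterministic}} \mathop{\mathbb{E}}_{\gifInstance\sim\mathcal{D}}[T_A(\gifInstance)] \;=\; \bOm{\log^2 \inputSize}.
\]
Therefore there exists at least one GIF instance on which any randomized algorithm requires $\bOm{\log^2 \inputSize}$ expected parallel I/Os, which is exactly the statement of the theorem.

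The main obstacle, if any, is essentially bookkeeping: one must verify that the hypotheses of the preceding lemma are respected by arbitrary deterministic algorithms in the model and that the class of deterministic algorithms quantified over in Yao's principle coincides with the one used in the lemma (in particular, that the ``omniscient central unit'' used in the lower bound argument does not restrict the adversarial class of algorithms). Apart from that, the argument is a direct application of the standard conversion from high-probability deterministic lower bounds to expected randomized ones, so no additional technical machinery is needed.
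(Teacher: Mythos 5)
Your proposal is correct and follows essentially the same route as the paper, which proves the preceding high-probability lemma for deterministic algorithms on a uniformly random permutation and then simply invokes Yao's principle to transfer the bound to randomized algorithms. Your added detail (converting the high-probability bound into an expectation bound via $\mathbb{E}[T]\geq(1-\tfrac{1}{\memorySize})\,c\log^2\inputSize$ before applying Yao) is exactly the bookkeeping the paper leaves implicit.
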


A simple reduction, given in \autoref{appendixGIF}, yields:

\begin{theorem}
\label{theoremBSPMapReduceGIF}
Solving the GIF problem in the BSP or in the MapReduce model with $\inputSize = P\memorySize$ and $P = \Theta(M) = \Theta(\sqrt{\inputSize})$ takes \bOm{\log \inputSize} communication rounds.
\end{theorem}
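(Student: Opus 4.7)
The plan is to reduce to the preceding theorem by simulating any interval-respecting BSP or MapReduce algorithm for GIF inside the \fusePEM, losing only a logarithmic factor per communication round. First I would fix the parameters: take BSP/MapReduce with $P = \Theta(\memorySize) = \Theta(\sqrt{\inputSize})$ and simulate in the \fusePEM with the same $P$ and $\memorySize$ and block size $\blockSize = \memorySize/2$, which are precisely the parameters for which the $\bOm{\log^2 \inputSize}$ lower bound of the previous theorem applies. Converting between the two initial data layouts (one atom per cell of shared memory versus atoms arbitrarily distributed across the $P$ private memories) costs only $\bO{1}$ parallel I/Os since every processor holds at most $\memorySize = 2\blockSize$ atoms.

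The heart of the reduction is simulating one communication round in $\bO{\log \inputSize}$ parallel I/Os. Local computation on a cache of $\memorySize$ atoms is free in PEM, so only the message exchange matters: each processor tags its outgoing atoms with the identifiers of their intended recipients and flushes them to a designated region of shared memory in $\bO{1}$ I/Os; the $\inputSize$ tagged atoms are then rearranged by destination using an optimal PEM sorter. For the chosen parameters, sorting costs $\bO{\frac{\inputSize}{P\blockSize} \cdot \log_{\memorySize/\blockSize}(\inputSize/\blockSize)} = \bO{\log \inputSize}$ parallel I/Os, and each processor finally loads its (at most $\memorySize$) incoming atoms in $\bO{1}$ I/Os. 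The simulation only moves and copies atoms in the communication phase and fuses them only during the simulated local computations, so it stays safely inside the \fusePEM.

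Putting the pieces together, any interval-respecting BSP or MapReduce GIF algorithm using $R$ (expected) communication rounds induces an \fusePEM algorithm using $\bO{R \log \inputSize}$ (expected) parallel I/Os. The preceding theorem then forces $R \log \inputSize = \bOm{\log^2 \inputSize}$, i.e., $R = \bOm{\log \inputSize}$; randomization transfers directly because the simulation preserves the expected cost under the same input distribution on the permutation underlying $\binaryTreeOf{\gifInstance}$.

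The main obstacle I expect to address carefully is showing that the tag-and-sort simulation is faithful to the restricted atomic/interval-fusion setting: the destination labels must count as external bookkeeping maintained by the algorithm, not as a device to peek inside the atoms or to reveal boundary information the adversary has not yet announced. Once this is pinned down, the remaining counting is immediate.
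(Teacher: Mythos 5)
Your proposal is correct and follows essentially the same route as the paper: simulate each BSP/MapReduce communication round in the (interval) PEM model at a cost of $\bO{\log \inputSize}$ parallel I/Os for the parameters $P = \Theta(\memorySize) = \Theta(\sqrt{\inputSize})$, and then conclude from the $\bOm{\log^2 \inputSize}$ PEM lower bound that $o(\log \inputSize)$ rounds would be contradictory. The only cosmetic difference is that you realize the per-round data exchange via a PEM sort while the paper invokes permuting, i.e. $O(\permp{\inputSize,\memorySize,\blockSize}) = \bO{\log P}$ parallel I/Os, which gives the same bound here.
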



GIF is an attempt to formulate how the known algorithms for list ranking distribute information by attaching it to atoms of the PEM model.
Most known algorithms for list ranking use fusing of edges on an independent set of edge-pairs (bridging out edges). 
This means that every edge is used (if at all) either as first or second edge in the fusing. 
This choice of the algorithm is taken without complete information, and hence we take it as reasonable to replace it by an adversarial choice, leading to the definition of the guide of a GIF instance.

Additionally, the PEM lower bound shows that there is no efficient possibility to perform different stages (matchings) in parallel, showing that (unlike in the efficient PRAM sorting algorithms) no pipelining seems possible.
At this stage, our lower bound is hence more a bound on a class of algorithms, and it remains a challenge to formulate precisely what this class is. 
Additionally, it would be nice to show lower bounds in a less restrictive setting. 


\section{Upper Bounds}
\label{sectionUpperBounds}

Improvements in the analysis~\cite{2012ThesisGero} of the PEM merge sort algorithm~\cite{2008ArgeEtAlFundamentalPEM} yield:

\begin{lemma}[\cite{2012ThesisGero}]
\label{theoremMergeSortPEM}
The I/O complexity of sorting \inputSize records with the PEM merge sort algorithm using $P\leq\frac{\inputSize}{\blockSize}$ processors is $\sortp{\inputSize,\memorySize,\blockSize}=\bO{\frac{\inputSize}{P\blockSize}\lgb_d\frac{\inputSize}{\blockSize}}$ for $d=\max\{2,\min\{\frac{\inputSize}{P\blockSize},\frac{\memorySize}{\blockSize}\}\}$. 
\end{lemma}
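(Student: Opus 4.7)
The plan is to analyze the PEM merge sort algorithm of Arge et al.\ and establish the claimed bound by choosing the merge fanout~$d$ adaptively. The algorithm proceeds in two phases. In the first phase, each of the $P$ processors loads its $\inputSize/P$ assigned elements into cache and sorts them internally, producing $P$ sorted runs at a cost of $\bO{\inputSize/(P\blockSize)}$ parallel I/Os. In the second phase, the $P$ runs are combined by $d$-way merges arranged in a balanced merge tree of height $\bO{\log_d P}$, with $d$ equal to the value specified in the lemma.

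For the per-level cost of the merge phase, I would invoke the parallel multi-way distribution technique of Arge et al.: to merge $d$ sorted sequences of total size~$\inputSize$ using $P$ processors, the output is split into $P$ contiguous pieces of size $\inputSize/P$ via pivots obtained by parallel multi-selection on the $d$ runs, after which each processor independently merges its assigned fragments. Provided $d \leq \memorySize/\blockSize$, one input block per run fits simultaneously in cache, so the merging work itself incurs $\bO{\inputSize/(P\blockSize)}$ I/Os per level of the merge tree. The selection and distribution overhead is then shown to be absorbed into this dominant term, which is the careful accounting refined in~\cite{2012ThesisGero} compared to the original analysis.

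The improvement over the classical bound comes from the adaptive choice of~$d$. When $\inputSize/(P\blockSize) \geq \memorySize/\blockSize$, the data per processor is large enough to saturate the caches, so setting $d = \memorySize/\blockSize$ yields the classical bound $\bO{(\inputSize/(P\blockSize)) \lgb_{\memorySize/\blockSize}(\inputSize/\blockSize)}$. When $\inputSize/(P\blockSize) < \memorySize/\blockSize$, the cache is underutilized and pushing $d$ all the way to $\memorySize/\blockSize$ would inflate per-level routing cost relative to the available per-processor workload; the better choice is $d = \inputSize/(P\blockSize)$, which still fits in cache, keeps each level's cost at $\bO{\inputSize/(P\blockSize)}$, and minimizes the number of merge levels. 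Both regimes are unified by the formula $d = \max\{2, \min\{\memorySize/\blockSize, \inputSize/(P\blockSize)\}\}$, the outer maximum ensuring that~$d$ stays above the trivial base case.

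The main technical obstacle is sustaining the $\bO{\inputSize/(P\blockSize)}$ per-level bound near the top of the merge tree, where the number of surviving runs drops below~$P$ and several processors must cooperate on a single merge. This is precisely what the recursive multi-way distribution of Arge et al.\ is designed to handle, and its I/O overhead must be shown to be sublogarithmic per level and thus hidden in the per-level merge cost; this is the delicate bookkeeping step sharpened in~\cite{2012ThesisGero}. Once this is in place, summing the $\bO{\inputSize/(P\blockSize)}$ per-level cost over the $\bO{\log_d(\inputSize/\blockSize)}$ levels of the merge tree yields the stated $\sortp{\inputSize,\memorySize,\blockSize}$ bound.
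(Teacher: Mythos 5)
The paper contains no proof of this lemma to compare against: it is imported as a black box from Greiner's thesis~\cite{2012ThesisGero}, which refines the analysis of the PEM merge sort of Arge et al.~\cite{2008ArgeEtAlFundamentalPEM}; the paper only states the result and cites it. Your sketch does reproduce the correct algorithmic outline (local run formation, a $d$-way merge tree, and the right intuition for the adaptive fanout $d=\max\{2,\min\{M/B,\,N/(PB)\}\}$, namely that a fanout exceeding $N/(PB)$ would make the per-processor, per-level cost be dominated by touching one block per run rather than by the $N/(PB)$ blocks of actual payload). However, the entire substance of the lemma --- that a $d$-way merge shared among up to $P\leq N/B$ processors, \emph{including} the multiselection/partitioning overhead needed when several processors cooperate on one merge, costs $O(N/(PB))$ parallel I/Os per level without the restrictions $P\leq N/B^2$ and $M=B^{O(1)}$ of the original analysis --- is precisely the step you defer back to~\cite{2012ThesisGero} (``its I/O overhead must be shown to be sublogarithmic per level''). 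As a standalone proof this is a genuine gap: what you have established is the easy reduction of the sorting bound to that per-level claim, not the claim itself, which is where all the work in Greiner's improvement lies.

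Two smaller inaccuracies to fix if you flesh this out. First, your Phase~1 assumes each processor can sort its $N/P$ elements inside its cache, i.e.\ $N/P\leq M$; in general the local sort is a sequential EM merge sort costing $O\bigl(\frac{N}{PB}\log_{M/B}\frac{N}{PB}\bigr)$ I/Os, which is still absorbed by the final bound since in that regime $d=M/B$. Second, you state the merge-tree height as $O(\log_d P)$ in one place and sum over $O(\log_d(N/B))$ levels in another; both are admissible upper bounds because $P\leq N/B$, but the argument should commit to one accounting (levels of surviving runs) and note explicitly why it is bounded by $\log_d(N/B)$.
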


We use it to extend in \autoref{appendixUpperBounds} the parameter range for the randomized list ranking algorithm~\cite{2010ArgeEtAlGraphAlgoPEM} from $P\leq\frac{\inputSize}{\blockSize^2}$, and $M=B^{\bO{1}}$ to:

\begin{theorem}
\label{theoremRuntimePEMListRanking}
The expected number of parallel I/Os, needed to solve the list ranking problem of size \inputSize in the CREW PEM model with $P\leq \frac{\inputSize}{\blockSize}$ is 
\begin{eqnarray*}
\mathcal{O}\left(\sortp{\inputSize,\memorySize,\blockSize}+(\log P)\lgb\frac{\blockSize}{\log P}\right) 
\end{eqnarray*}
which is for $\blockSize < \log P$ just \sortp{\inputSize,\memorySize,\blockSize}. 
\end{theorem}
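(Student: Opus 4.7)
The plan is to take the randomized recursive-contraction algorithm of Arge et al.~\cite{2010ArgeEtAlGraphAlgoPEM} essentially as given and re-analyze it using the refined sort bound of \autoref{theoremMergeSortPEM}. Recall that the algorithm runs in $\Theta(\log \inputSize)$ phases; in each phase (a)~a constant-fraction independent set $I$ of list edges is picked by random $3$-coloring, (b)~the endpoints of $I$ are bridged out by a constant number of sorts and scans, (c)~the contracted list is recursively ranked, and (d)~ranks are propagated back along the bridged edges. The cost of a phase operating on a list of size $\inputSize_i$ is $O(\sortp{\inputSize_i,\memorySize,\blockSize})$ plus $O(1)$ scans, and the list size shrinks by a constant factor in expectation from one phase to the next.

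I would next split the summation of phase costs into two regimes. For phases with $\inputSize_i \ge P\blockSize$, \autoref{theoremMergeSortPEM} yields per-phase cost proportional to $\frac{\inputSize_i}{P\blockSize}\lgb_d\frac{\inputSize_i}{\blockSize}$; the leading $\inputSize_i/(P\blockSize)$ factor forces the per-phase cost to shrink by a constant factor whenever the list halves, so these phases contribute a geometric series summing to $O(\sortp{\inputSize,\memorySize,\blockSize})$. Once $\inputSize_i$ drops below $P\blockSize$ this argument breaks down because the sort cost no longer depends on $\inputSize_i/(P\blockSize)$. For these small phases I would instead execute the contraction step using only $P_i = \max\{1,\inputSize_i/\blockSize\}$ of the $P$ processors, so that each round costs $O(\lgb_d(\inputSize_i/\blockSize))$ parallel I/Os. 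Summing such costs over the $O(\log(P\blockSize))$ remaining phases, and noting that in the last $O(\log P)$ phases the list fits in $O(\log P)$ blocks so the per-phase cost is $O(\lgb(\blockSize/\log P))$, gives the additive end-game term $O((\log P)\lgb(\blockSize/\log P))$.

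A few routine but necessary checks would remain: that the random independent-set step still yields constant-fraction shrinkage in expectation when only $P_i$ processors are active (a direct Chernoff/coupling argument suffices), that the expansion step of every phase can be executed within the same asymptotic bound as its contraction step, and that the switch from ``many processors'' to ``few processors'' in the end-game can be arranged without extra sort overhead. The $\lgb_d$ dependence on $\memorySize$ in \autoref{theoremMergeSortPEM} is what makes the analysis work beyond the prior $\memorySize = \blockSize^{O(1)}$ assumption, and the ability to scale $P_i$ with $\inputSize_i$ is what removes the $P \le \inputSize/\blockSize^2$ restriction of \cite{2010ArgeEtAlGraphAlgoPEM}.

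The hard part will be the end-game accounting: naively summing the sort costs of the $O(\log \inputSize)$ final phases at their worst-case value gives $O(\log P \cdot \log \blockSize)$, which is too large. To obtain $O((\log P)\lgb(\blockSize/\log P))$ one must recognize that the $\lgb$ factor only reaches its maximum in the last $O(\log P)$ phases, where the list truly fits inside $\log P$ blocks; in earlier small-regime phases the list size $\inputSize_i$ is large enough that $\lgb_d(\inputSize_i/\blockSize)$ is smaller, and these contributions geometrically decay. Making this bookkeeping precise, and reconciling it with the randomization in the independent-set selection so that the expected number of phases stays $O(\log \inputSize)$ throughout, is the main technical content of the proof.
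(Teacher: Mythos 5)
There is a genuine gap, and it sits exactly where you flag ``the hard part'': the end-game accounting for a purely sort-based contraction does not give the claimed term. Write $N_i$ for the list size in phase $i$. With $P_i=\max\{1,N_i/\blockSize\}$ active processors, \autoref{theoremMergeSortPEM} gives a per-phase cost of $\Theta(\lgb(N_i/\blockSize))$ (the base $d$ collapses to $2$ once $P_i\blockSize=N_i$). Since $N_i$ shrinks geometrically, these per-phase costs decrease \emph{linearly} in the phase index, from $\Theta(\log P)$ near $N_i=P\blockSize$ down to $O(1)$ near $N_i=\blockSize$; your monotonicity claim is reversed (the factor $\lgb_d(N_i/\blockSize)$ is largest in the \emph{early} small-regime phases, not the last ones, and once the list fits in $\log P$ blocks a sort costs $O(\log\log P)$, not $O(\lgb(\blockSize/\log P))$). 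Summing this triangular sequence over the $\Theta(\log P)$ phases below $P\blockSize$ gives $\Theta(\log^2 P)$, which exceeds the stated bound whenever $\blockSize=P^{o(1)}$; in particular for $\blockSize<\log P$ and $\inputSize=\Theta(P\blockSize)$ the theorem demands $O(\sortp{\inputSize,\memorySize,\blockSize})=O(\log P)$ in total, while your scheme spends $\Theta(\log^2 P)$. (Your treatment of the phases with $N_i\ge P\blockSize$ via a geometric series is fine and matches the paper; the extended parameter range comes from \autoref{theoremMergeSortPEM}, not from scaling $P_i$.)

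The missing idea is the paper's use of a \emph{second} algorithm instead of sorting-based contraction all the way down. Contraction by random independent sets and sorting (the Arge et al./Vishkin scheme) is used only while $N_i\ge P\min\{\log P,\blockSize\}$; the additive term $(\log P)\lgb\frac{\blockSize}{\log P}$ arises solely from the regime $P\log P\le N_i\le P\blockSize$ (which exists only if $\blockSize>\log P$), where there are $O(\log\frac{\blockSize}{\log P})$ phases, each a sort of at most $P\blockSize$ elements costing $O(\log P)$ parallel I/Os. Once $N_i\le P\min\{\log P,\blockSize\}$, the paper switches to a PEM implementation of the Anderson--Miller work-optimal randomized PRAM algorithm: each processor is assigned one block holding a queue of $\min\{\log P,\blockSize\}$ consecutive list elements and in every round bridges out only the head of its queue, which takes $O(1)$ parallel I/Os per round using direct processor-to-processor communication, with no sorting at all. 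By the Anderson--Miller tail bound (a queue is nonempty after $16\log P$ rounds with probability $P^{-9/8}$) and a union bound, all queues are empty after $O(\log P)$ rounds with high probability, so the end-game costs $O(\log P)$ expected parallel I/Os. Without such a non-sorting end-game (or an argument of comparable strength), your approach cannot reach the bound in the statement.
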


In order to make the standard recursive scheme work, two algorithms are used.
By \autoref{theoremMergeSortPEM}, the randomized algorithm of \cite{2010ArgeEtAlGraphAlgoPEM}, which is based on \cite{1984VishkinRandomizedLR}, can be used whenever $\inputSize \geq P\min\{\log P,\blockSize\}$.
This yields the $(\log P)\lgb\frac{\blockSize}{\log P}$ term, if $\blockSize >\log P$.
Otherwise ($\inputSize\leq P\min\{\log P,B\}$) a simulation of a work-optimal PRAM algorithm \cite{1988AndersonMillerDetLR} is used.
A careful implementation and analysis yield \autoref{theoremRuntimePEMListRanking}.

\bibliographystyle{plain}
\bibliography{listrankingbib}

%
%
%
%

\newpage
\appendix

\section{Appendix}

\subsection{Number of Block Permutations of Proximate Neighbors}
\label{appendixProximateNeighbors}

\begin{lemma}
Any algorithm solving all proximate neighbors instances of size \inputSize must be capable of producing at least $\frac{(\inputSize/2)!}{(\blockSize/2)^{\frac{\inputSize}{2}}}$ different block permutations.
\end{lemma}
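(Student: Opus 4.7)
My plan is a double-counting argument: I lower bound the number of distinct outputs by (total instances) divided by (maximum instances sharing a single output). I would first count instances: each instance is specified by a labeling $\lambda:[\inputSize]\to[\inputSize/2]$ with $|\lambda^{-1}(i)|=2$, and the standard multinomial count gives $\inputSize!/2^{\inputSize/2}$ such labelings.

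Next, I would fix an output block permutation whose blocks contain $n_1,n_2,\ldots$ atoms (each $n_j\le \blockSize$ and $\sum_j n_j=\inputSize$, since every atom must appear in the output) and upper bound the number of labelings this single output solves. The output solves $\lambda$ precisely when every pair $\lambda^{-1}(i)$ lies inside a single block, which forces each $n_j$ to be even. Each solved $\lambda$ is then uniquely encoded by choosing a perfect matching inside each block ($(n_j-1)!!$ choices per block) together with a bijection from the resulting $\inputSize/2$ pairs to the labels $[\inputSize/2]$, so the number of solved labelings is at most $(\inputSize/2)!\cdot\prod_j(n_j-1)!!$.

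The key estimate is $\prod_j(n_j-1)!! \le (\blockSize/2)^{\inputSize/2}$, which I would derive block-by-block via AM--GM: $(n_j-1)!! = 1\cdot 3\cdots(n_j-1) \le (n_j/2)^{n_j/2} \le (\blockSize/2)^{n_j/2}$, and then multiply over $j$ using $\sum_j n_j=\inputSize$. Putting everything together, the number of distinct block permutations the algorithm must produce is at least
$$
\frac{\inputSize!/2^{\inputSize/2}}{(\inputSize/2)!\,(\blockSize/2)^{\inputSize/2}} \;=\; \frac{(\inputSize-1)!!}{(\blockSize/2)^{\inputSize/2}} \;\ge\; \frac{(\inputSize/2)!}{(\blockSize/2)^{\inputSize/2}},
$$
where the final inequality follows from $(\inputSize-1)!! = \prod_{k=1}^{\inputSize/2}(2k-1) \ge \prod_{k=1}^{\inputSize/2} k = (\inputSize/2)!$ compared termwise.

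The main obstacle I anticipate is justifying the counting in the second step: one must verify that every solved labeling corresponds to exactly one (intra-block matching, label assignment) pair, so that $(\inputSize/2)!\cdot\prod_j(n_j-1)!!$ really is an upper bound and no hidden overcounting deflates the estimate. Once that combinatorial identity is nailed down, the AM--GM estimate and the closing algebraic comparison to $(\inputSize/2)!/(\blockSize/2)^{\inputSize/2}$ go through routinely.
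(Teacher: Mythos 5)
Your proposal is correct, and it follows the same double-counting skeleton as the paper (number of outputs needed $\geq$ number of instances divided by the maximum number of instances a single output block permutation can solve), but the counting itself is carried out differently. You work over the full instance family of $\inputSize!/2^{\inputSize/2}$ labelings, which forces you to count intra-block perfect matchings, giving the exact per-output count $(\inputSize/2)!\prod_j(n_j-1)!!$, the AM--GM step $(n_j-1)!!\le(n_j/2)^{n_j/2}\le(\blockSize/2)^{n_j/2}$, and the final comparison $(\inputSize-1)!!\ge(\inputSize/2)!$; all of these steps are sound (your encoding of a solved $\lambda$ as a matching plus a label bijection is indeed injective, and odd block sizes contribute nothing). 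The paper instead shortcuts this by restricting to the subclass of instances with $\labelfunction{}(x_i)=i$ for the first $\inputSize/2$ atoms, so that an instance is just a bijection between the two halves: there are exactly $(\inputSize/2)!$ instances, and one output solves at most $\prod_j x_j!\le\prod_j(\blockSize/2)^{x_j}=(\blockSize/2)^{\inputSize/2}$ of them, where $x_j\le\blockSize/2$ is the number of first-half atoms in block $j$. Your version is a bit longer but works over all instances (and yields the slightly stronger intermediate bound $(\inputSize-1)!!/(\blockSize/2)^{\inputSize/2}$); the paper's restriction buys a one-line per-block estimate at the cost of arguing only about a subclass, which is all the downstream lemma requires. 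One shared caveat: like the paper, you implicitly assume the output contains each atom exactly once (no copies); the paper discharges this in Section~\ref{sectionModelling} by noting that copying and deletion do not help for these problems, so this is not a gap relative to the paper's treatment.
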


\begin{proof}

We restrict the inputs to a special class of the proximate neighbors problem:
The first $1\leq i\leq\frac{\inputSize}{2}$ atoms are labeled such that $\labelfunction{x_i}=i$ holds.
Hence there are $\frac{\inputSize}{2}!$ different instances of this type, as this is the number of matchings between the first and the second half of the input.

Note that some of the (seemingly) different inputs are solved by the same output block permutation.
We upper bound the number of different input instances one block permutation might solve.
Considering a block in an output block permutation, there are at most $x\leq\frac{\blockSize}{2}$ atoms of the first half of the input and $x$ atoms of the second half of the input.
The number of ways to understand this block as part of a proximate neighbors output is $x! \leq x^x$, since this is the number of ways to match the two types of atoms. 
In total, any output block permutation solves at most $(\blockSize/2)^{\frac{\inputSize}{2}}$ different input instances since there are $\frac{\inputSize}{2}$ pairs and all are in a block with at most $\frac{B}{2}$ pairs.
\qed
\end{proof}

\subsection{Proof of \autoref{theoremLowerBoundSemiGroup}}
\label{appendixSemigroup}

For proving the the correctness of the reduction in \autoref{theoremLowerBoundSemiGroup} we prove in \autoref{lemmaEverySemigroupOperation} the following fact on programs which use only the semigroup operation to solve an instance of the semigroup evaluation problem.
We show, a program which evaluates $\prod_{j=1}^{\inputSize}a_j$ has to compute for every $1\leq i \leq \inputSize-1$ intermediate results $x=\prod_{j=c}^{i}a_j\semigroupOperation\prod_{m=i+1}^{k}a_m$ with $1\leq c\leq i \leq k -1 \leq \inputSize-1$. 

To this end, we describe the semigroup operations performed by programs through graphs which are similar to algebraic circuits.

\createvar{intermediateResult}{b}{}
\createvar{calculationDAG}{D}{}
\createvar{calculationDAGof}{\calculationDAG}{i}
\createvar{program}{P}{}
\createvar{resultVariable}{r}{}

\begin{definition}
Let \program be a program to compute the product $\prod_{i=1}^{\inputSize}a_i$ of a semigroup.
Its calculation DAG \calculationDAGof{\program} is defined as follows: 
For each input value, intermediate result and the output variable, named result, there is a vertex in~\calculationDAGof{\program}.
For each application of the semigroup operation $z=x\semigroupOperation y$ where the operands are represented in~\calculationDAGof{\program} by $v_x$, $v_y$, and $v_z$, there are the directed edges $(v_x,v_z)$ and  $(v_y,v_z)$ in~\calculationDAGof{\program}.
We say that~$x$ is \emph{used} for~$y$ if there is a directed path from~$v_x$ to~$v_y$.
A calculation DAG is called normalized if all variables, corresponding to vertices, are used for the output vertex.

A calculation DAG \calculationDAG is called correct if, independently of the semigroup and the input values used, the output variable \resultVariable, corresponding to~$v_{\resultVariable}$, has as value $\prod_{i=1}^{\inputSize}a_i$.
\end{definition}

\begin{lemma}
If there is a correct calculation graph \calculationDAGof{\program}, then there is a normalized correct calculation graph \calculationDAG.
\end{lemma}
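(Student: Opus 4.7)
The plan is to obtain $\calculationDAG$ from $\calculationDAGof{\program}$ by deleting every vertex that has no directed path to the result vertex $v_{\resultVariable}$; call such vertices \emph{useless}. By construction the resulting graph is normalized, so two things remain to be verified: first, that $\calculationDAG$ is still a legitimate calculation graph (in particular, every surviving internal vertex still carries its two operand edges, and every input vertex $v_{x_i}$ survives), and second, that the value it computes at $v_{\resultVariable}$ is still $\prod_{i=1}^{\inputSize}a_i$ for every semigroup and every input assignment.

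The internal-vertex part of the first point is immediate: if an internal vertex $v_z$ corresponding to $z = x\semigroupOperation y$ survives, then $v_z$ has a path to $v_{\resultVariable}$; prepending the edge $(v_x,v_z)$ (respectively $(v_y,v_z)$) produces a path from $v_x$ (respectively $v_y$) to $v_{\resultVariable}$, so both operand vertices are also kept together with the edges into $v_z$. Hence the sub-DAG of ancestors of $v_{\resultVariable}$ in $\calculationDAG$ is identical to the sub-DAG of ancestors of $v_{\resultVariable}$ in $\calculationDAGof{\program}$, which settles the correctness part as well: the value computed at $v_{\resultVariable}$ is unchanged, and by hypothesis equals $\prod_{i=1}^{\inputSize}a_i$.

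The main obstacle is showing that no input vertex is useless, since without this the deleted graph would no longer have one vertex per input. To this end we exploit that $\calculationDAGof{\program}$ is required to be correct for \emph{every} semigroup and every input assignment. Take the free semigroup generated by an alphabet $\{s_1,\ldots,s_{\inputSize},s'\}$ of $\inputSize+1$ distinct symbols, and assign the inputs so that the true product $\prod_{i=1}^{\inputSize}a_i$ evaluates to the word $s_1 s_2 \cdots s_{\inputSize}$. If some input vertex had no path to $v_{\resultVariable}$, then replacing the value at that vertex by $s'$ would leave the value computed by $\calculationDAGof{\program}$ at $v_{\resultVariable}$ unchanged, while the correct product would become a different word in which some $s_j$ is replaced by $s'$. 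This contradicts the correctness of $\calculationDAGof{\program}$, so every input vertex survives, and the pruned DAG $\calculationDAG$ is the desired normalized correct calculation graph. Everything else is bookkeeping about the ancestor-closedness of the kept vertex set; the free-semigroup instantiation is the one place where the hypothesis that $\calculationDAGof{\program}$ works for all semigroups is genuinely used.
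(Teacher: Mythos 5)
Your proposal is correct and follows essentially the same route as the paper, whose entire proof is the one-line observation that deleting vertices not used for the result cannot affect correctness. Your additional free-semigroup argument that no input vertex can be pruned is sound extra rigor (it justifies that the pruned graph is still a calculation DAG with a vertex for every input), a detail the paper leaves implicit.
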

\begin{proof}
Removing vertices that are not used for the result from~\calculationDAGof{\program} cannot change its correctness. 
\qed
\end{proof}

\begin{lemma}
\label{lemmaAllProd}
Every intermediate result~\intermediateResult in a normalized correct calculation DAG~\calculationDAG for semigroup evaluation $\resultVariable=\prod_{i=1}^{\inputSize}a_i$ represents a product $\intermediateResult=\prod_{i=j}^{k}a_i$, with $1\leq j \leq k \leq \inputSize$.
\end{lemma}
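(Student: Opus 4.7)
The plan is to exploit the fact that a normalized correct calculation DAG must produce the right answer in \emph{every} semigroup, and in particular in the free semigroup generated by the symbols $a_1,\dots,a_{\inputSize}$. In this semigroup, elements are nonempty finite sequences over $\{1,\dots,\inputSize\}$, the semigroup operation is concatenation, and two elements are equal if and only if their sequences coincide. Evaluating \calculationDAG{} in this free semigroup assigns to every node $v$ a sequence $T(v)$, with $T(a_i)=(i)$ for input nodes and $T(v)=T(x)\cdot T(y)$ whenever $v$ is computed as $x\semigroupOperation y$. Correctness of \calculationDAG{} forces $T(\resultVariable)=(1,2,\dots,\inputSize)$, a strictly increasing sequence of consecutive integers.

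The core of the proof is an induction that walks backwards along the DAG from \resultVariable{} into its predecessors. The inductive claim is that every visited node $v$ has $T(v)=(j,j+1,\dots,k)$ for some $1\le j\le k\le \inputSize$. The base case $v=\resultVariable$ is clear. For the inductive step, let $w=x\semigroupOperation y$ already have a contiguous trace $(j,\dots,k)$; since $T(w)=T(x)\cdot T(y)$ with $T(x),T(y)$ both nonempty, and since the only way to split a strictly increasing consecutive sequence into two nonempty concatenated pieces is at a single index, we obtain $T(x)=(j,\dots,m)$ and $T(y)=(m+1,\dots,k)$ for some $j\le m<k$. Thus both predecessors of $w$ again carry contiguous traces.

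Normalization guarantees that every node of \calculationDAG{} has a directed path to \resultVariable, and walking this path gives a sequence of nodes to which the induction step applies in turn, so every intermediate $\intermediateResult$ has $T(\intermediateResult)=(j,\dots,k)$. Since $T$ was defined by evaluating the DAG in the free semigroup, the same sequence of operations in any semigroup \semigroup{} produces the product $\prod_{i=j}^{k} a_i$, which is exactly the claim.

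The only subtlety I would need to address is the degenerate case $w = v\semigroupOperation v$, where the same node feeds both inputs of an operation lying on the backward path from \resultVariable. Here the split argument yields $T(v)=(j,\dots,m)=(m+1,\dots,k)$, which forces $j=m+1$ and $m=k$ and hence $j>k$, contradicting the nonemptiness of $T(v)$. So such a configuration cannot appear on a path from \resultVariable{} in a correct normalized DAG, and the backward induction goes through without complication. I expect this diamond-style case to be the one detail needing explicit care; everything else is a routine consequence of concatenation in the free semigroup.
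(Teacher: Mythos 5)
Your proof is correct, and it is essentially the same freeness argument as the paper's, but run in the opposite direction and in a direct rather than contradictory form. The paper assumes an intermediate result that is not a contiguous product, writes it as $s\cdot a_j\cdot a_k\cdot t$ with $k\neq j+1$, propagates this adjacent pair \emph{forward} through every node that uses it (normalization ensures the output is among these), and then refutes correctness by one concrete instantiation in the concatenation semigroup with $a_j=$'a', $a_{j+1}=$'b', $a_k=$'c' and all other inputs the empty word. You instead evaluate the whole DAG once in the free semigroup, observe that correctness pins the output word to $(1,\dots,\inputSize)$, and propagate contiguity \emph{backward} along a directed path from the output to the node in question, finishing via the evaluation homomorphism into an arbitrary semigroup. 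What your route buys: it is a positive characterization with no case analysis of what "not of the claimed form" means, it never needs an identity element (the paper's instantiation uses $\varepsilon$, i.e.\ really a monoid, harmless but slightly off the stated semigroup setting), and the interval-splitting step $T(w)=T(x)\cdot T(y)$ is made fully explicit. What the paper's route buys is brevity: the forward propagation needs no backward induction along a path and no discussion of the degenerate operation $w=v\cdot v$, which, as you correctly note, cannot occur under your inductive hypothesis and so is harmless either way.
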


\begin{proof}
Assume there is an intermediate result~\intermediateResult not of the claimed form.
Then~\intermediateResult can be written as $\intermediateResult=s\semigroupOperation{}a_j\semigroupOperation a_k\semigroupOperation t$, with $j+1 \neq k$ and where $s$ and $t$ are arbitrary products over the inputs.
By induction, every intermediate result~$u$ that uses~\intermediateResult can be written as $u = s^{\prime}\semigroupOperation a_j\semigroupOperation a_k\semigroupOperation t^{\prime}$, where $s^{\prime}$ and $t^{\prime}$ are arbitrary products over the inputs.
Since~\calculationDAG is normalized, also the result can be written in this form, which contradicts the correctness of the program:
Consider for example the concatenation semigroup, $a_j=\hbox{'a'}$, $a_{j+1}=\hbox{'b'}$, $a_k=\hbox{'c'}$ and all other~$a_i=\varepsilon$.
Then the correct result is 'abc', whereas the computed result contains the substring~'ac'.
\qed
\end{proof}

\begin{lemma}
\label{lemmaEverySemigroupOperation}
Given a normalized correct calculation DAG of a semigroup evaluation yielding an (intermediate) result $\resultVariable=\prod_{j=c}^{d}a_{j}$, with $c+1\leq d$, there is for every $i\in [c,d-1]$ a vertex~$v_t$ which applies the semigroup operation to intermediate results such that $x=\prod_{j=h}^{i}a_j\semigroupOperation{}\prod_{m=i+1}^{k}a_m$ with $1\leq c\leq h\leq i \leq k -1 \leq d-1\leq \inputSize -1$.
\end{lemma}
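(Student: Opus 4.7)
The plan is to induct on the length $N := d-c \geq 1$ of the product $r = \prod_{j=c}^{d} a_j$ computed at a vertex $v_r$ of the (normalized correct) sub-DAG. In the base case $N=1$ (so $d = c+1$), the only choice is $i = c$, and since $v_r$ is by assumption the target of a semigroup operation whose operands are contiguous products (by \autoref{lemmaAllProd}) that together yield $a_c \cdot a_{c+1}$, this operation itself realizes the claimed decomposition with $h = c$ and $k = d$.

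For the inductive step, I would start from $v_r$ and observe that it is the target of a semigroup operation $r = y \semigroupOperation z$. By \autoref{lemmaAllProd}, both $y$ and $z$ are contiguous products, say $y = \prod_{j=h_y}^{k_y} a_j$ and $z = \prod_{j=h_z}^{k_z} a_j$. The next step is to argue, exactly as in the proof of \autoref{lemmaAllProd} using the concatenation semigroup, that we must have $h_y = c$, $k_z = d$, and $k_y + 1 = h_z$: any gap, overlap, or mismatched endpoint would produce an output different from $\prod_{j=c}^{d} a_j$ under a suitable assignment of inputs in the free semigroup. Setting $i^* := k_y \in [c, d-1]$, the vertex $v_r$ itself provides the required vertex for the split point $i = i^*$, with $h = c$, $k = d$.

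For any $i \in [c, i^* - 1]$, I would apply the inductive hypothesis to the sub-DAG whose output vertex is $v_y$, which computes $\prod_{j=c}^{i^*} a_j$ of strictly shorter length $i^* - c < N$; symmetrically for $i \in [i^* + 1, d-1]$ I would use the sub-DAG rooted at $v_z$. Combining the three cases covers every $i \in [c, d-1]$, completing the induction.

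The step I expect to be the main obstacle is justifying that the sub-DAG rooted at $v_y$ (resp.\ $v_z$) satisfies the inductive hypotheses, namely that it is a \emph{normalized correct} calculation DAG for its sub-product. Correctness follows because the outer DAG yields the right value in every semigroup, and any deviation of $v_y$ from $\prod_{j=c}^{i^*} a_j$ could be exposed by plugging into the concatenation semigroup (as in \autoref{lemmaAllProd}); normalization can be enforced by pruning, to the sub-DAG rooted at $v_y$, any vertex not lying on a directed path to $v_y$, which does not affect what $v_y$ computes. Once this is in place, the induction proceeds cleanly, and the length argument ensures termination since at each step the relevant product shrinks by at least one input atom.
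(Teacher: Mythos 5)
Your proposal is correct and follows essentially the same route as the paper: strong induction on $d-c$, using \autoref{lemmaAllProd} to see that the final operation producing $r$ splits it as $\prod_{m=c}^{\ell}a_m \semigroupOperation \prod_{o=\ell+1}^{d}a_o$, handling $i=\ell$ at that vertex and recursing into the left or right operand otherwise. The only difference is cosmetic: the paper states the lemma for intermediate results within the single normalized correct DAG, so the induction hypothesis applies directly to the operands and your extra step of pruning and re-certifying a sub-DAG rooted at $v_y$ (or $v_z$) is not needed, though it does no harm.
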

\begin{proof}
Strong Induction over $d-c$:

Base case $d-c=1$: $r=\prod_{j=c}^{d}a_{j} = a_{c}\semigroupOperation{}a_{d}= a_c \semigroupOperation{} a_{c+1}$ . \checkmark

By \autoref{lemmaAllProd}, the computation represented by $v_t$ is $\prod_{j=c}^{d}a_{j}=\prod_{m=c}^{\ell}a_{m} \cdot \prod_{o=\ell+1}^{d}a_{o}$.

Induction Step: 
If $i=\ell$: \checkmark.

If $c\leq i < \ell$: By induction the vertex can be found in the intermediate results, calculating $\prod_{j=c}^{\ell}a_{j}$. \checkmark

If $\ell +1 \leq i\leq d-1$: By induction the vertex can be found in the intermediate results, calculating $\prod_{j=\ell+1}^{d}a_{j}$. \checkmark
\qed
\end{proof}

For the following theorem, we assume that all PEM algorithms are {\em normalized}, that is, during each I/O a processor can be either active or idle and all active processors perform the same type of task (read or write). 
Note that the I/O complexity of a normalized PEM algorithm is at most twice as large as the I/O complexity of an algorithm that implements both, read and write, operations by different processors within the same I/O operation. 

For completeness we restate \autoref{theoremLowerBoundSemiGroup}.

\begin{theorem}
\label{theoremLowerBoundSemiGroupRestate} 
At least one instance of the semigroup evaluation problem of size~\inputSize requires \bOm{\permp{\inputSize,\memorySize,\blockSize}} parallel I/Os in the CREW \semigroupPEM model with $P\leq\frac{\inputSize}{\blockSize}$ processors.
\end{theorem}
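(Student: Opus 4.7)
The plan is to reduce the proximate neighbors problem (PN) to semigroup evaluation (SE): given any correct \semigroupPEM program $P^S$ solving the instance $I^S_\pi$ in $t$ parallel I/Os, I will construct a \movePEM program $P^P$ that solves the corresponding PN instance $I^P_\lambda$ in $t+O(\permp{N,M,B})$ parallel I/Os. Plugging this into \autoref{lemmaProximateNeighborsLbPEM} then forces some SE instance to require $\Omega(\permp{N,M,B})$ parallel I/Os.

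Given a PN instance $I^P_\lambda$, fix any permutation $\pi$ of $[N]$ such that $\{\pi(2j-1),\pi(2j)\}=\lambda^{-1}(j)$ for each $j\in[N/2]$, and form the SE instance $I^S_\pi$ over the semigroup on $X\times X$ with $(a,b)\cdot(c,d)=(a,d)$ and initial atoms $a_i=(x_i,x_i)$. Its final result is the single pair $(x_{\pi(1)},x_{\pi(N)})$. By \autoref{lemmaEverySemigroupOperation} applied to the normalized calculation DAG of $P^S$, for every split index $i\in\{1,\dots,N-1\}$ the DAG contains at least one semigroup application of the form $(x_{\pi(c)},x_{\pi(i)})\cdot(x_{\pi(i+1)},x_{\pi(k)})=(x_{\pi(c)},x_{\pi(k)})$. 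Specializing to $i=2j-1$, the two inner atoms $x_{\pi(2j-1)}$ and $x_{\pi(2j)}$ are exactly the PN pair $\lambda^{-1}(j)$, and both lie in the cache of the executing processor at the moment the application is carried out.

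The PN program $P^P$ simulates $P^S$ step by step, and for each of the $N/2$ distinguished semigroup applications identified above has the executing processor buffer the corresponding middle pair in its cache and later write it into an output block pre-allocated to pair index~$j$. Because this pair-to-block assignment depends only on the publicly known $\pi$ and $\lambda$, it is hard-coded into $P^P$, so no two processors ever target the same block and the CREW policy is respected; the result is a valid PN output block permutation. Emitting the $N/2$ pairs costs $O(N/(PB))$ extra parallel I/Os, which is absorbed into the target bound: either $\permp{N,M,B}=\omega(N/(PB))$, in which case the extra writes are negligible, or $\permp{N,M,B}=\Theta(N/(PB))$, in which case $\Omega(\permp{N,M,B})$ I/Os are already needed just to read all $N$ inputs of $I^S_\pi$.

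The main obstacle will be the overhead accounting for this harvesting step: one has to verify that extracting the $N/2$ middle pairs into a CREW-valid output block layout truly does not inflate the parallel I/O count beyond $O(\permp{N,M,B})$. Pre-allocating output blocks from $\pi$ takes care of CREW, and the $O(N/(PB))$ write bookkeeping is dominated by the permuting term as above. A secondary subtlety is that \autoref{lemmaEverySemigroupOperation} reasons about calculation DAGs while \semigroupPEM programs may additionally copy and delete atoms; this is harmless, because such operations neither remove nor alter any semigroup application in the DAG, so each distinguished middle pair is still present in the executing processor's cache at the moment needed by $P^P$.
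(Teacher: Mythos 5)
Your overall route is the same as the paper's: reduce proximate neighbors to semigroup evaluation via the semigroup $(a,b)\semigroupOperation(c,d)=(a,d)$ on pairs, use the structural fact (\autoref{lemmaEverySemigroupOperation}) that every split index must be realized by some semigroup application, harvest the middle pair of each such application as proximate-neighbors output, and invoke \autoref{lemmaProximateNeighborsLbPEM}. However, there is a genuine gap exactly at the point you flag as ``the main obstacle'' and then wave off: the claim that emitting the $\inputSize/2$ pairs costs only $\bO{\inputSize/(P\blockSize)}$ extra parallel I/Os. Your scheme pre-allocates one output block per pair index $j$, so a processor that performs $q$ distinguished applications must issue $q$ separate write I/Os (one block per write), not $q/(\blockSize/2)$; worse, nothing bounds how many distinguished applications a single processor performs between its I/Os, so ``buffer the pair in cache and write it later'' can both overflow the cache (the cache of size $\memorySize$ is already occupied by the simulated computation) and force one processor to absorb up to $\Theta(\inputSize)$ write I/Os, which is not $\bO{\permp{\inputSize,\memorySize,\blockSize}}$ when the permuting bound is the sorting-like term. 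The paper closes this hole by a per-I/O, not global, accounting: it normalizes the program so that every semigroup application is executed immediately after the read that brings its operands together, which caps the number of middle pairs generated per read I/O of a processor at $2\blockSize$, so they can be flushed with two extra writes per simulated I/O, giving $t_{\proximateProblem{}}\leq 3t_{\semigroupProblem{}}+\bO{\inputSize/(P\blockSize)}$ independently of load balance.

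A secondary omission: your PN program receives single atoms $x_i$, while the simulated SE program operates on pair atoms $(x_i,x_i)$; the paper handles this by an explicit scan producing the pair atoms and by running the reduction with $\MPN=2\MSE$, $\BPN=2\BSE$, using $\permp{\inputSize,2\MSE,2\BSE}\geq\frac14\permp{\inputSize,\MSE,\BSE}$ to keep the bound intact. Your closing dichotomy (overhead negligible when $\permp{\inputSize,\memorySize,\blockSize}=\omega(\inputSize/(P\blockSize))$, trivial read bound otherwise) is fine reasoning, but it only helps once the overhead bound itself is established, which requires something like the paper's normalization argument.
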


\begin{proof}[of \autoref{theoremLowerBoundSemiGroup}]
We give a non-uniform reduction from the proximate neighbors problem to the semigroup evaluation problem.
Let~\instPN be an instance of the proximate neighbors problem of size~\inputSize over the input atoms \mbox{$\inputAtoms=\{x_i|i\in[\inputSize]\}$} with its labeling function~\labf.
In the following we consider the semigroup evaluation problem over the semigroup on the set $\inputAtoms^2$ with the semigroup operation $(a,b)\semigroupOperation(c,d) = (a,d)$, where $a,b,c,d\in \inputAtoms$.
The instance~\instSE of size~\inputSize of this semigroup evaluation problem is defined over the input atoms $a_i=(x_i,x_i)$, with~\mbox{$1\leq i\leq \inputSize$}, which is related by its permutation to \instPN: 
The permutation \perm{} of~\instSE is one of the permutations such that for all $i\in\left[\frac{\inputSize}{2}\right]$, $\{\perm{2i-1},\perm{2i}\} = \labf{}^{-1}(i)$ holds.

Let~\progSE be a program solving~\instSE with \tSE{} parallel I/Os for parameters \MSE and \BSE.
In the following we prove that \progSE, using $P$ processors, can be transformed into a program~\progPN  using the same number of processors, and solving~\instPN with $\tPN \leq 5\tSE$ parallel I/Os, where~\mbox{$\MPN=2\MSE$}, and $\BPN=2\BSE$.
Since for \tPN the lower bound is known by \autoref{lemmaProximateNeighborsLbPEM}, a lower bound follows for \tSE.

The key idea is to write for each application of the semigroup \linebreak oper\-ation~\mbox{$(a,b)\semigroupOperation(c,d)$} the pair $\{b,c\}$ as a result for \instPN to the output. 
By the observation on semigroup evaluation programs, it follows that, among others, all pairs $\{a_{\perm{2i-1}},a_{\perm{2i}}\} = \{x,y\}$, i.e., all atoms of \instPN with $\labf(x) = \labf(y)$, are written as output.
Note that this may affect the number of parallel I/Os heavily, as in every parallel I/O, there may be a processor writing up to $2\MSE$ pairs due to repeated application of the semigroup operation.
Therefore we argue that the program can be changed such that for each parallel I/O of~\progSE there are at most two output operations needed to write all pairs to the output, yielding~\mbox{$3\tSE$} parallel I/Os in total.

As mentioned before the theorem we assume that \progSE solves \instSE \linebreak with~\mbox{\tSE} parallel I/Os such that all processors perform either a write operation or a read operation.
From this we construct the program~\progPN{} that solves~\instPN with $\tPN{}=3\tSE{} + \frac{3\inputSize}{P\BPN}$ parallel I/Os. 
To this end we assume that \progSE is normalized if it fulfills the following invariant: 
If an input operation $s$ yields that intermediate results $u$ and $v$ are in the memory of one processor and $w=u\semigroupOperation v$ is contained in the result, then $w$ is calculated immediately after $s$.
Conversely, there is no read operation $s$ such that intermediate results $u$ and $v$ are in the memory of one processor then it is not possible that the semigroup operation is not applied if $w=u\semigroupOperation v$ is contained in the result.
Thus, for each intermediate result $v$ in an input block, there can be at most two intermediate results $u$ and $w$ such that $u\semigroupOperation v\semigroupOperation w$ has to be computed. 
Since each processor can read in one I/Os at most \BSE atoms there are at most $2\BSE$ pairs which are written to the shared memory. 

Assume there is an input operation in~\progSE not satisfying the invariant. 
Then it can be normalized by applying the semigroup operation $w=u\semigroupOperation v$ after the input operation $s$.
Further semigroup operations using $u$ are replaced by using~$w$, while intermediate results $x=v\semigroupOperation a_k$ are replaced by $x=a_k$.
This yields by the associativity of the semigroup still the correct intermediate results. 
Furthermore this normalization does not increase memory consumption of \progSE:
Since the storage, used temporarily for $v$ is not used anymore, and $u$ is replaced by $w$. 

Because~\progPN is expected to work on full blocks of size \BPN, but \progSE operates on blocks with \BSE semigroup atoms, an additional scan is needed to create blocks of the correct size.
This takes one block $x_1,\ldots,x_{\BPN}$ and writes two blocks $(x_1,x_1), (x_2,x_2),\ldots,(x_{\BSE},x_{\BSE})$ and $(x_{\BSE+1},x_{\BSE+1}),\ldots,(x_{\BPN},x_{\BPN})$. 
It needs~$\frac{3\inputSize}{P\BPN}$ parallel I/Os.

Since $\permp{\inputSize,2\MSE,2\BSE} \geq \frac{1}{4}\permp{\inputSize,\MSE,\BSE}$ holds, we obtain:
\begin{multline*}
 3\tSE{} + \frac{3\inputSize}{P\BPN} = \tPN \geq \\ \permp{\inputSize,2\MSE,2\BSE} \geq \frac{1}{4}\permp{\inputSize,\MSE,\BSE}
\end{multline*}

Because every correct program for~\tSE must read the complete input, we have $\tSE{}\geq \frac{\inputSize}{P\BSE}=\frac{2\inputSize}{P\BPN}$, such that the above inequality leads to 
\[
 \tSE{}  \geq  \frac{1}{18}\permp{\inputSize,\MSE,\BSE} . 
\]
\qed\end{proof}

\subsection{Proof of \autoref{lemmaAtomicEdgeContraction}}
\label{appendixAtomicEdgeContraction}

For completeness we restate \autoref{lemmaAtomicEdgeContraction}

\begin{theorem} 
\label{lemmaAtomicEdgeContractionRestate}
At least one instance of the atomic edge contraction problem of size \inputSize requires \bOm{\permp{\inputSize,\memorySize,\blockSize}} parallel I/Os in the CREW \linebreak\bridgingPEM model with $P\leq\frac{\inputSize}{\blockSize}$ processors.
\end{theorem}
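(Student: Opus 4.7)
The plan is a non-uniform reduction from semigroup evaluation to atomic edge contraction, which lets us invoke \autoref{theoremLowerBoundSemiGroup}. Let \instSE be a hard instance of semigroup evaluation of size \inputSize with permutation \permutation{} and input atoms $a_i=x_{\permutation{i}}$ stored at position $\permutation{i}$ (equivalently, $x_j$ at position~$j$). Adding a sentinel vertex $\permutation{\inputSize+1}$, define the atomic edge contraction instance \instanceAtomicListRanking by placing the edge atom $e_{\permutation{i}}=(\permutation{i},\permutation{i+1})$ at position $\permutation{i}$. These edges form a path from $s=\permutation{1}$ to $t=\permutation{\inputSize+1}$, and the shared-memory layout matches that of \instSE cell by cell.

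Given any program \programAtomicListRanking solving \instanceAtomicListRanking in $t$ parallel I/Os with $P\leq\inputSize/\blockSize$ processors, I build a semigroup program \progSE by keeping the same sequence of I/O instructions and replacing each edge contraction by an application of the semigroup operation on the two cache cells addressed by the contraction. The correctness of \progSE is established through a structural invariant proved by induction on the execution: whenever \programAtomicListRanking holds the edge $(\permutation{j},\permutation{k+1})$ in some cell, \progSE holds the intermediate product $\prod_{i=j}^{k}a_i$ in the same cell. The invariant is immediate at the start and is preserved trivially by I/O operations; for the combining step it is preserved because correctness of \programAtomicListRanking on the specific path forces every contraction to have the form $(\permutation{j},\permutation{k+1})\cdot(\permutation{k+1},\permutation{\ell+1})\to(\permutation{j},\permutation{\ell+1})$, and the mirrored semigroup step $\prod_{i=j}^{k}a_i\cdot\prod_{i=k+1}^{\ell}a_i=\prod_{i=j}^{\ell}a_i$ is valid in every semigroup by associativity. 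When \programAtomicListRanking outputs the edge $(s,t)$, \progSE holds $\prod_{i=1}^{\inputSize}a_i$ at the same cell, so \progSE solves \instSE in $t$ parallel I/Os.

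\autoref{theoremLowerBoundSemiGroup} then yields $t=\bOm{\permp{\inputSize,\memorySize,\blockSize}}$, establishing the claim. The hard part, which I expect to be the main obstacle, is the careful verification of the invariant: one must argue that the oblivious program \progSE really qualifies as a correct semigroup evaluation program in the sense required by \autoref{theoremLowerBoundSemiGroup} — the analog for our setting of \autoref{lemmaAllProd}. Concretely, one has to confirm that on the concrete path built from \permutation{}, no contraction of \programAtomicListRanking can ever act on two edges that do not cover adjacent sub-paths (otherwise the endpoints would not meet and the contraction would not succeed), so that the semigroup shadow of every combining step is always a fusion of two adjacent intermediate products and thus valid in every semigroup by associativity.
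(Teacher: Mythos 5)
Your proposal follows essentially the same route as the paper's own proof: a non-uniform reduction from a hard semigroup evaluation instance with permutation $\pi$ to an atomic edge contraction instance with edge $(\pi(i),\pi(i+1))$ stored at the cell holding the corresponding semigroup atom, replacing each contraction by an application of the semigroup operation and then invoking \autoref{theoremLowerBoundSemiGroup}. The invariant you spell out is precisely the detail the paper leaves implicit, and the adjacency condition you flag as the ``hard part'' is in fact immediate from the definition of the edge-contracting operation, which only permits fusing edges of the form $(a,b)$ and $(b,c)$.
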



\begin{proof}
Let \instSE be an instance of the semigroup evaluation problem of size \inputSize and \perm{} its permutation. 
Let \instanceAtomicListRanking be one instance of the atomic edge contraction problem of size \inputSize such that $e_{\perm{i}}=(\perm{i},\perm{i+1})$, where $1\leq i \leq \inputSize$, and the atom corresponding to $e_i$ is located in the memory cell $i$ in the shared memory (assume that $\perm{\inputSize+1} = \inputSize+1$). 

Let \programAtomicListRanking be a program solving \instanceAtomicListRanking.
Then this is transformed into a program~\progSE by the following steps.

We observe the following relationship between the inputs of \instSE and \instanceAtomicListRanking:
At the $i$-th location of \instSE is the element $x_i=a_{\perm{}^{-1}(i)}=a_j$ located.
The target of the $i$-th edge is $\perm{\perm{}^{-1}(i)+1}$ at which position in \instSE the element $x_{\perm{\perm{}^{-1}(i)+1}}=a_{\perm{}^{-1}(i)+1}=a_{j+1}$ is located.

Therefore replacing in \programAtomicListRanking the operation ``move atoms of semigroup elements'' by ``move the corresponding atoms of edges'', and ``merging two edges $(a,b)$ and~$(b,c)$'' by of ``applying the semigroup operation'', yields program \progSE solving~\instSE.
By \autoref{theoremLowerBoundSemiGroup} we can assume that \instSE is an instance for which it takes at least \bOm{\permp{\inputSize,\memorySize,\blockSize}} parallel I/Os to solve. 
Thus, to solve \instanceAtomicListRanking, there are at least \bOm{\permp{\inputSize,\memorySize,\blockSize}} parallel I/Os needed.
\qed
\end{proof}

\subsection{Guided Interval Fusion}
\label{appendixGIF}

\subsubsection{Proof of \autoref{lemmaMultiplicity}}
For completeness, we restate \autoref{lemmaMultiplicity}.
\begin{lemma}\label{lemmaMultiplicityRestate}
	Consider the family of GIF algorithms depending on parameter~\memorySize (and thus defining $P,\inputSize,t_j$, and \levelOfTracedAtoms), and the graph~$G^i_0$ defined by the snapshot at time~$t_i$.
	Let $p(i,\memorySize)$ be the probability that $G^i_0$ has multiplicity at most~$\memorySize^{\frac{5}{8}}$ for uniformly chosen permutation~$\pi$ defining \binaryTreeOf{\gifInstance} for the GIF instance \gifInstance. 
	Then there is a~$\memorySize^\prime$ such that for all $\memorySize\geq \memorySize^\prime$ and for each~$i \leq \levelOfTracedAtoms$ it holds $p(i,\memorySize) \geq  1-\frac{1}{\memorySize^2}$.
%
\end{lemma}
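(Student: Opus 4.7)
The plan is to exploit that, conditional on the algorithm's observable history up to time~$t_i$, the identities of the traced pairs $\{a_e,b_e\}_{e\in W_i}$ form an (essentially) uniformly random matching among the indistinguishable level-$i$ atoms, and then to union-bound over the edges of~$G^i_0$ after applying concentration to each. First, I would formalise the probabilistic structure: because atoms carry no information and the guide releases boundaries as a deterministic function of~$\pi$, everything the algorithm knows at time~$t_i$ is determined by boundaries revealed at levels strictly below~$i$ (any level-$i$ boundaries solved by chance encounters affect only $O(1)$ nodes per stage and may be absorbed into lower-order terms). Since no traced node of~$P_i$ has been revealed yet, every permutation that permutes the roles of unmarked level-$i$ pairs is equally consistent with the view, so, conditionally on the view, the unmarked pairs form a uniformly random matching; in particular, the indicators $Y_e=\mathbf{1}[e\text{ contributes to edge }\{u,v\}]$ for $e\in W_i$ are negatively associated.

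Next, I would bound the expected multiplicity of a fixed candidate edge $\{u,v\}$. The vertex set of~$G^i_0$ has size $P+N/B=O(\memorySize)$, giving $O(\memorySize^2)$ candidate edges. For a position~$u$, let $T_u$ be the multiset of level-1 atoms whose derivatives are presently at~$u$; since a cache holds~$\memorySize$ items and a block holds $B\leq\memorySize$, we have $|T_u|\leq\memorySize$. Under the uniform random matching each specific ordered pair is realised with probability $O(1/|W_i|)$; using $|W_i|\geq X>\memorySize^{7/4}$ this gives
\[
   \mathbb{E}[m_{uv}] \ \leq\ \frac{2\,|T_u|\,|T_v|}{|W_i|} \ \leq\ \frac{2\memorySize^{2}}{\memorySize^{7/4}} \ =\ 2\memorySize^{1/4}.
\]

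Finally, I would invoke Chernoff's inequality in the large-deviation regime for sums of negatively associated Bernoullis. With $\mu=\mathbb{E}[m_{uv}]=O(\memorySize^{1/4})$ and threshold $\tau=\memorySize^{5/8}\gg e\mu$,
\[
   \Pr\!\bigl[m_{uv}\geq \memorySize^{5/8}\bigr] \ \leq\ \Bigl(\tfrac{e\mu}{\tau}\Bigr)^{\tau} \ \leq\ \exp\!\bigl(-\Omega(\memorySize^{5/8}\log \memorySize)\bigr),
\]
so a union bound over the $O(\memorySize^2)$ candidate edges yields total failure probability at most $O(\memorySize^2)\exp\bigl(-\Omega(\memorySize^{5/8}\log\memorySize)\bigr)<\memorySize^{-2}$ for all sufficiently large~\memorySize, which is the claim.

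The main technical hurdle I anticipate is the first step: rigorously justifying the uniform-random-matching claim in the presence of atom copies, chance encounters, and the handful of marked nodes per stage, and verifying that the negative-association property really holds in the distribution obtained by conditioning on the algorithm's view. Once that structural claim is in place, the expected-value computation is essentially counting atoms per location, and the concentration-plus-union-bound step becomes a textbook Chernoff argument.
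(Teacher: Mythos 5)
Your proposal follows essentially the same route as the paper's proof: condition on the algorithm's view at time $t_i$, argue that the unrevealed level-$i$ boundaries form a uniform random matching of the traced (indistinguishable) nodes in which marked nodes are matched only to unmarked ones, bound the expected multiplicity of any fixed edge of $G^i_0$ by $O(M^2/|W_i|)=O(M^{1/4})$, concentrate, and union-bound over polynomially many edges. The one substantive difference is the concentration step: you assert negative association of the matching indicators and apply a multiplicative Chernoff bound, whereas the paper avoids any NA claim by revealing the matching sequentially --- each traced marble at $u$ chooses its partner uniformly among the remaining admissible ones, so in every step the conditional probability of hitting $M_v$ is at most $p'=M/\bigl(\tfrac12(|W_i|-2M)\bigr)$ --- which stochastically dominates the multiplicity by a Binomial$(M,p')$ variable and then applies Hoeffding, yielding the (weaker but sufficient) tail $e^{-M^{1/4}/2}$. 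Your NA assertion for this constrained matching (no marked--marked edges, in the presence of copies and chance encounters) is exactly the nontrivial point you flag; note that you do not actually need it, since the same sequential-domination argument gives you your stronger $\exp\bigl(-\Omega(M^{5/8}\log M)\bigr)$ bound via a multiplicative Chernoff bound on the dominating Binomial. One small correction: the vertex set of $G^i_0$ is not $P+N/B=O(M)$, because the algorithm may have written traced atoms into up to $P\,t_i$ fresh blocks by time $t_i$; the paper invokes the standing assumption $t_i\le\log^2 M$ to bound the vertex count by $N/B+P\log^2 N=O(M\log^2 M)$, which still gives only polynomially many candidate edges, so your union bound (and the final $1-1/M^2$ guarantee) goes through unchanged.
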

\begin{proof}
The total number of traced vertices is~$2|W_i| > M^\frac74$.
Drawing the permutation of the leafs of the GIF instance uniformly at random implies that the nodes of level~$l$ of the tree from a uniform matching among the subtrees of their children.
The marked subtrees are identifiable, but the unmarked ones are indistinguishable, and the matching does not connect marked subtrees.
Hence, the random experiment is the following:
There are marbles representing the traced atoms.
At most half of the marbles are marked by unique numbers, the others are unmarked.
Consider two arbitrary vertices $u$ and $v$ of $G^i_0$ ($u=v$ is allowed). 
This is reflected by the algorithm choosing two arbitrary sets $M_u$ and~$M_v$ of marbles of size at most~\memorySize.
Drawing the edges of the matching (a uniform one because $\pi$ was uniform) can be done by considering the marbles in an arbitrary order and choosing the matching partner uniformly at random from the remaining possible neighbors.
We chose as order to first take all marked marbles at~$u$ (and choose a neighbor uniformly at random from the remaining unmarked marbles), and then the (remaining) unmarked marbles at~$u$ (choose a random marked marble). 
The random variable~$Y$ we are interested in is the number of times this random choice leads to a connection to~$M_v$.
In every step the conditional probability of increasing~$Y$ is at most $p'=\frac{M}{\frac{1}{2}(\nrPorts-2M)}$.
Hence, the expected value of~$Y$ is asymptotically at most~$\memorySize^2/\nrPorts < \memorySize^{\frac{1}{4}}$.
We define $Y$ to be big if $Y\geq \memorySize^{\frac{5}{8}}$.
To apply a Hoeffding inequality, we can upper bound~$Y$ by the following Bernoulli experiment:
We $n'=\memorySize$ times toss a coin with success probability~$p'$, and as for the probability to have more than $T=(p'+\varepsilon')n'$ successes with $T=\memorySize^{\frac{5}{8}}$ and hence $\varepsilon'=\memorySize^{\frac{-3}{8}}$ 
is good enough. 
Then the Hoeffding inequality states that $Y$ is big with probability at most $p=e^{-2\varepsilon'^2n'}$.
We calculate $\varepsilon'^2n'= (M^\frac{-3}{8}/2)^2M= M^{\frac{-6}{8}+1}/4=M^\frac{1}{4}/4$, leading to $p=e^{-M^\frac{1}{4}/2}$ as the probability of big~$Y$.
Interpreting this in the context of~$G^i_0$, the probability that any edge or self-loop has  high multiplicity is at most~$p$.
To bound the overall multiplicity of $G^i_0$ we use a union bound.
Because the lower bound is only useful for algorithms performing at most~$\log^2 \memorySize$ parallel I/Os, the number of vertices in~$G^i_0$ is at most $\frac{\inputSize}{\blockSize} + P\log^2 \inputSize= 2\memorySize + \memorySize 4\log^2 \memorySize$, and hence the number of edges (pairs) is at most $28 \memorySize^2\log^4 \memorySize$.
Multiplying this with $p$ we get that the probability that any edge in the graph has multiplicity more than $M^\frac58$ is at most $28 M^2\log^4 M e^{-M^\frac{1}{4}/2}$ which tends to zero quicker than $1/M^2$.
\qed
\end{proof}

\newpage
\subsubsection{Proof of \autoref{lemmaQuadruple}}
For completeness, we restate \autoref{lemmaQuadruple}.

\begin{lemma}\label{lemmaQuadrupleRestate}
  If the graph $G^i_t$ has multiplicity at most~$m$, then $G^i_{t+1}$ has multiplicity at most~$4m$.
\end{lemma}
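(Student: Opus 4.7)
The plan is to track, for each edge $\{u,w\}$ in $G^i_{t+1}$, where the traced pairs contributing to its multiplicity could have come from in $G^i_t$, and to bound the number of such origins.

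First, I would prove a ``source-set lemma'': after one parallel I/O and the ensuing local fuse, copy, and delete operations, every atom at a vertex $v$ in configuration $C_{t_i+t+1}$ is derived from atoms residing in a set $\mathrm{Src}(v)$ of at most two vertices in configuration $C_{t_i+t}$. A cache $c$ gains new content only from at most one block $b$ it reads in this round, so $\mathrm{Src}(c) \subseteq \{c,b\}$; a cache that writes or is idle has $\mathrm{Src}(c)=\{c\}$. A shared-memory block $b$ is either untouched ($\mathrm{Src}(b)=\{b\}$) or overwritten by the unique writing cache $c$ (so $\mathrm{Src}(b) \subseteq \{b,c\}$). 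In every case $|\mathrm{Src}(v)| \le 2$, because copies, deletions, and local fusings all occur inside one cache and thus cannot enlarge its source set.

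Second, I will charge each traced pair $e$ contributing to the multiplicity of $\{u,w\}$ in $G^i_{t+1}$ to an edge of $G^i_t$. Since $e$'s contribution requires a derivative of $a_e$ at $u$ and a derivative of $b_e$ at $w$ (or vice versa) in configuration $t+1$, the source-set lemma produces $u' \in \mathrm{Src}(u)$ and $w' \in \mathrm{Src}(w)$ at which derivatives of $a_e$ and $b_e$ respectively already resided at time $t$. Hence $e$ contributed to the edge $\{u',w'\}$ of $G^i_t$, which gives
\[
\mathrm{mult}_{t+1}(\{u,w\}) \;\le\; \sum_{(u',w') \in \mathrm{Src}(u) \times \mathrm{Src}(w)} \mathrm{mult}_t(\{u',w'\}) \;\le\; 4m,
\]
since there are at most $|\mathrm{Src}(u)|\cdot|\mathrm{Src}(w)| \le 4$ source pairs and each is bounded by $m$ by hypothesis. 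As this holds for every edge of $G^i_{t+1}$, the multiplicity of $G^i_{t+1}$ is at most $4m$.

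The main subtlety will be arguing that the local fusings taking place between $t$ and $t+1$ cannot introduce \emph{new} contributions rather than merely redistributing existing ones. If during local computation an atom derived from $a_e$ is fused with one derived from $b_e$, the resulting atom covers the boundary $p_e$; the traced boundary is then solved and $e$ drops out of the set of traced pairs, so its multiplicity only decreases. Chance encounters at higher levels are handled identically. The source-set argument also remains sound under the CREW policy: concurrent reads of a block $b$ place $b$ in multiple caches' source sets, but this neither enlarges any individual $\mathrm{Src}(c)$ beyond size $2$ nor alters $\mathrm{Src}(b)$ itself, so the counting argument is unaffected.
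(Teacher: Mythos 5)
Your proposal is correct and follows essentially the same route as the paper: the source-set observation (each vertex of the new configuration draws its atoms from at most two vertices of the old one, namely a cache and the single block it reads or writes) is exactly the paper's node-merging argument, yielding at most $2\times 2=4$ old edges per new edge, each of multiplicity at most $m$. The paper merely adds the refinement that a write round does not increase multiplicity at all, which is not needed for the stated bound of $4m$.
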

\begin{proof}
  Observe that a derived atom of the traced elements do not appear out thin air, but they must be copied around.
  If the I/O operation between the two configurations is a write operation, the graph~$G^i_{t+1}$ is~$G^i_t$ with (some of) the nodes standing for processors split, distributing or copying edges.
  Obviously, this does not increase the multiplicity.
  If the I/O operation is a read operation, then each node representing a processor can be united with one representing a block.
  In the worst case the new edge considered is between two such united nodes, and the multiplicity can stem from four former edges, see Figure~\ref{figureNodeMerging}.
\createvar{vertexA}{v_{1}}{}
\createvar{vertexB}{v_{2}}{}
\createvar{vertexC}{v_{3}}{}
\createvar{vertexD}{v_{4}}{}
\createvar{matchingLeft}{m_1}{}
\createvar{matchingRight}{m_2}{}
	\begin{figure}
\centering
	\begin{tikzpicture}[every loop/.style={}]
	\node[node,label=below:\vertexA] (v_1) at (0,0) {} edge [in=100,out=200,loop] node[below left]  {$\ell_1$} ();
	\node[node,label=above:\vertexB] (v_2) at (0,1) {} edge [in=260,out=170,loop] node[above left]  {$\ell_2$} ();
	\node[node,label=below:\vertexC] (v_3) at (2,0) {} edge [in=80,out=340,loop] node[below right]  {$\ell_3$} ();
	\node[node,label=above:\vertexD] (v_4) at (2,1) {} edge [in=20,out=280,loop] node[above right]  {$\ell_4$} ();
	
	\draw (v_1) -- node[below] {$e_1$} (v_3);
	\draw (v_2) -- node[pos=0.85,above] {$e_2$} (v_3);
	\draw (v_1) -- node[pos=0.15,above] {$e_3$} (v_4);
	\draw (v_2) -- node[above] {$e_4$} (v_4);
	\draw[dashed] (v_1) -- node[left] {\matchingLeft}  (v_2);
	\draw[dashed] (v_3) -- node[right] {\matchingRight}(v_4);
	\end{tikzpicture}
	\caption{Contracting two edges of a matching in } 
	\label{figureNodeMerging}
	\end{figure}
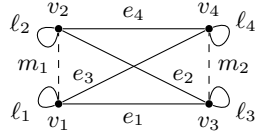
  If we consider a self loop, it can stem from two self-loops and one edge.
  Hence each edge of~$G^i_{t+1}$ has multiplicity at most four times the multiplicity of~$G^i_t$.
\qed
\end{proof}

\subsubsection{Proof of \autoref{lemmaGIFProgress}}
For completeness, we restate \autoref{lemmaGIFProgress}.

\begin{lemma}
  Let~$A$ be an algorithm solving an GIF instance \gifInstance guided by \binaryTreeOf{\gifInstance} of height~$h$ traced at level~$k=h/16$.
  Each progress stage $j<\varStages=\levelOfTracedAtoms-1$ of $A$, assuming $t_j<\log^2 M$, takes time $t_j-t_{j-1}$=\bOm{\log M}.
\end{lemma}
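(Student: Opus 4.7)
The plan is to combine the preceding two lemmas into an upper bound on how quickly the graph $G^j_t$ can accumulate the self-loop weight required to end stage $j$, and then contrast this with the $X > M^{7/4}$ lower bound on the number of boundaries that must get solved during the stage. Throughout I will rely on the hypothesis $t_j < \log^2 M$ both to cap the vertex count of $G^j_t$ and to ensure that chance encounters alone cannot end the stage.

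Concretely, I first apply the Multiplicity Lemma to conclude that, with high probability, every edge and self-loop of $G^j_t$ starts the stage with multiplicity at most $M^{5/8}$. Iterating the Quadruple Lemma $t$ times gives that every edge has multiplicity at most $4^t M^{5/8}$ after $t$ parallel I/Os. I then bound the number of vertices of $G^j_t$: each vertex corresponds to a cache or a block of shared memory that has been touched, and at most $P$ new blocks appear per parallel I/O, so $|V| \leq P + N/B + Pt = O(M\log^2 M)$ whenever $t \leq \log^2 M$. Summing the per-edge bound over all self-loops yields a total self-loop multiplicity bounded by $O(M^{13/8}\log^2 M \cdot 4^t)$. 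Chance encounters contribute at most $t \cdot P \cdot M^{5/8} \leq M^{13/8}\log^2 M$ additional solved boundaries in total, which is $o(M^{7/4})$ because $\log^2 M \ll M^{1/8}$ for large $M$. Consequently, at least half of the $X > M^{7/4}$ solved boundaries must come from self-loops, forcing
$$O\bigl(M^{13/8}\log^2 M \cdot 4^t\bigr) \;=\; \Omega\bigl(M^{7/4}\bigr),$$
i.e., $4^t = \Omega(M^{1/8}/\log^2 M)$ and hence $t_j - t_{j-1} \geq t = \Omega(\log M)$.

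The main technical obstacle I expect is aligning the time indices so that the Multiplicity Lemma really yields a bound at the start of stage $j$ (rather than at time $t_j$, where its statement literally places $G^j_0$), and verifying that solving a traced boundary $e \in W_j$ corresponds in a one-to-one way with a unit of self-loop multiplicity, so that the upper-bound-versus-required-count argument is not double-counted. Once those bookkeeping details are settled, the estimate above is a straightforward comparison of the exponentially growing $4^t$ against a polynomial-in-$M$ gap, leaving polylogarithmic slack.
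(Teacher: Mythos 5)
Your proposal is correct and follows essentially the same route as the paper's proof: start the stage with the $M^{5/8}$ multiplicity bound from the multiplicity lemma, grow it by $4^t$ via the quadruple lemma, bound the vertex count by $O(P\log^2 M)$ using $t_j<\log^2 M$, add the (negligible) chance-encounter term, and compare the resulting total self-loop multiplicity against the required $X>M^{7/4}$ to get $4^t=\Omega(M^{1/8}/\log^2 M)$ and hence $t=\Omega(\log M)$. The only cosmetic difference is that you split off chance encounters as a separate $o(M^{7/4})$ term, while the paper folds them into the single factor $(4^t+t)$.
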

\begin{proof}
  At time $t_j$, the multiplicity of $G^i_{t_j}$ is by Lemma~\ref{lemmaMultiplicity} at most~$M^\frac58$, and by Lemma~\ref{lemmaQuadruple} at time $t_j+t$ at most $4^tM^\frac58$.
  By the assumption $t_j<\log^2 M$,  the number of vertices in~$G^i_{t'}$ is at most  $P\log^2M$.
  Hence, at time $t'=t_j+t$, the total multiplicity~$T$ of loops in $G^i_{t'}$ is at most $M^\frac58  P 4^t \log^2M$, and the number of solved traced pairs including chance encounters is at most $M^\frac58  P (4^t + t) \log^2M$
  For the stage to finish, this number must be at least~$X>M^\frac{7}{4}$, 
  Hence we have $M^\frac58  M (4^t+t) \log^2M>M^\frac{7}{4}$, implying $4^t+t > M^{\frac74-1-\frac58}/\log^2M = M^\frac18/\log^2M$ and hence $t=\bOm{\log M}$.
\qed
\end{proof}

\subsubsection{Proof of \autoref{theoremBSPMapReduceGIF}}
For completeness, we restate \autoref{theoremBSPMapReduceGIF}.

\begin{theorem}
\label{theoremBSPMapReduceGIFRestate}
Solving the GIF problem in the BSP or in the MapReduce model with $\inputSize = P\memorySize$ and $P = \Theta(M) = \Theta(\sqrt{\inputSize})$ takes \bOm{\log \inputSize} communication rounds.
\end{theorem}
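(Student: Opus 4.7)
The plan is to simulate any BSP or MapReduce algorithm for GIF on an \fusePEM with only an $O(\log \inputSize)$ overhead per round, and then invoke the $\Omega(\log^2 \inputSize)$ PEM lower bound from the preceding theorem. First I would fix the PEM parameters to mirror those of the BSP/MapReduce setting: $P$ processors, cache size $\memorySize$, block size $\blockSize = \memorySize/2$, and $\inputSize = P\memorySize$, which is exactly the regime in which the preceding lower bound applies.

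To simulate one BSP/MapReduce round in the \fusePEM, observe that each processor sends at most $\memorySize$ atoms and receives at most $\memorySize$ atoms, so one round amounts to rearranging up to $\inputSize$ atoms among the $P$ processors' caches according to the destination function chosen by the simulated algorithm. I would implement the round by having each processor first write its $\memorySize$ outgoing atoms to a private buffer in shared memory in $O(\memorySize/\blockSize) = O(1)$ parallel I/Os (which also trivially respects the CREW write restriction, since the buffers are disjoint), then invoke the PEM permuting routine, which by Greiner's bound~\cite{2012ThesisGero} costs $\permp{\inputSize,\memorySize,\blockSize} = O(\log \inputSize)$ parallel I/Os for these parameters (since $\inputSize/(P\blockSize) = 2$ and $d = 2$, so the formula reduces to $O(\log_{2}(\inputSize/\blockSize)) = O(\log \inputSize)$), and finally have every processor read its $\memorySize$ incoming atoms from the permuted region in $O(1)$ I/Os. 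All local computation, including fusing operations, is performed inside the processor's cache between communication rounds and contributes no parallel I/Os.

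Combining the simulation with the preceding PEM lower bound gives $R \cdot O(\log \inputSize) = \Omega(\log^2 \inputSize)$ for any BSP or MapReduce algorithm solving GIF in $R$ rounds, whence $R = \Omega(\log \inputSize)$. The only subtle point I anticipate is verifying that the simulation does not accidentally grant the simulating \fusePEM program more information about the hidden permutation $\permutation{}$ than the guide has already revealed to the simulated algorithm; this is immediate, however, because the simulation is oblivious: it merely mimics, step by step, the communications and fusing choices the original algorithm bases on the already-revealed boundaries, so the same distribution over executions is produced on both machines and Yao's principle in the preceding theorem applies verbatim.
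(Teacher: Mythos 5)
Your proposal is correct and follows essentially the same route as the paper: simulate each BSP/MapReduce communication round in the \fusePEM with $O(\permp{\inputSize,\memorySize,\blockSize}) = O(\log \inputSize)$ parallel I/Os for these parameters, and then conclude from the $\Omega(\log^2 \inputSize)$ PEM lower bound that $o(\log \inputSize)$ rounds would be contradictory. Your write-buffer/permute/read implementation and the obliviousness remark are just a more detailed spelling-out of the paper's one-line simulation argument.
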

\begin{proof}
Communication of each round in the distributed models can be implemented in the PEM model in $O(\permp{\inputSize, \memorySize, \blockSize})$ parallel I/Os, which are in this parameter setting \bO{\log P} parallel I/Os. 
Therefore, $o(\log \inputSize)$ communication rounds for $P = \Theta(M) = \Theta(\sqrt{\inputSize})$, would imply $o(\log^2 \inputSize)$ parallel I/Os in the PEM model.  
\qed
\end{proof}

\subsection{Upper Bounds}
\label{appendixUpperBounds}

In the following we extend the parameter range of the randomized list ranking algorithm of \cite{2010ArgeEtAlGraphAlgoPEM} for the CREW PEM model. 
In a first part we present the algorithmic concepts \cite{2010ArgeEtAlGraphAlgoPEM,2012ThesisGero,1984VishkinRandomizedLR,1991AndersonMillerRanomizedWorkOptLR}.
In a second part we analyze the runtime for different parameter settings of \inputSize, \memorySize, \blockSize and $P$. 
Every linked list can be converted into a double linked list by permuting twice \cite{2010ArgeEtAlGraphAlgoPEM}. 
Since this matches the lower bound of \autoref{sectionListRankingLB} it is reasonable to assume that the input is a double linked list.

\begin{figure}
\centering
\begin{tikzpicture}
\def\leftLabel{-0.25}
\def\bottomLabel{-0.25}
\def\rightBorder{8}
\def\midBorder{4}
\def\pPos{1}
\def\plgPos{2}
\def\pbPos{3}
\def\bigPos{4}
\coordinate (0Left) at (0,0);
\coordinate (0Mid) at (\midBorder,0);
\coordinate (0Right) at (\rightBorder,0);
\coordinate (PLeft) at (0,\pPos);
\coordinate (PRight) at (\rightBorder,\pPos);
\coordinate (PlgLeft) at (0,\plgPos);
\coordinate (PlgMid) at (\midBorder,\plgPos);
\coordinate (PBLeft) at (0,\pbPos);
\coordinate (PBMid) at (\midBorder,\pbPos);
\coordinate (PBRight) at (\rightBorder,\pbPos);
\coordinate (BigLeft) at (0,\bigPos);
\coordinate (BigMid) at (\midBorder,\bigPos);
\coordinate (BigRight) at (\rightBorder,\bigPos);
\fill[blue!20!white] (0Left) -- (0Right) -- (PBRight) -- (PBMid)  -- (PlgLeft);
\shade[bottom color=green!40!white, top color=green!0!white] (PBLeft) -- (PBRight) -- (BigRight) -- (BigLeft) ;
\fill[red!40!white] (PlgLeft) -- (PBMid) -- (PBLeft);
\draw[black] (0Left) -- (0Right);
\draw[black,dashed] (PLeft) -- (PRight);
\draw[black] (PlgLeft) -- (PBMid);
\draw[black] (PBLeft) -- (PBRight);
\draw[black] (0Left) -- (PBLeft);
\draw[black] (0Right) -- (PBRight);
\draw[black,dashed,->] (PBLeft) -- (BigLeft); 
\draw[black,dashed] (PBMid) -- (BigMid);
\draw[black,dashed] (PBRight) -- (BigRight);
\draw[black,dashed] (0Mid) -- (PBMid);
\node[anchor=east] at (\leftLabel,0) {$N=1$};
\node[anchor=east] at (\leftLabel,\pPos) {$N=P$};
\node[anchor=east] at (\leftLabel,\plgPos) {$N=P\log P$};
\node[anchor=east] at (\leftLabel,\pbPos) {$N=PB$};
\node[anchor=north] at (\midBorder,\bottomLabel) {$B=\log P$};
\node[anchor=north] at (6.5,\bottomLabel) {$B<\log P$};
\node[anchor=north] at (1.5,\bottomLabel) {$B>\log P$};
\end{tikzpicture}
\caption{Illustration of the Parameter Range}
\label{figureListRankingParameters}
\end{figure}
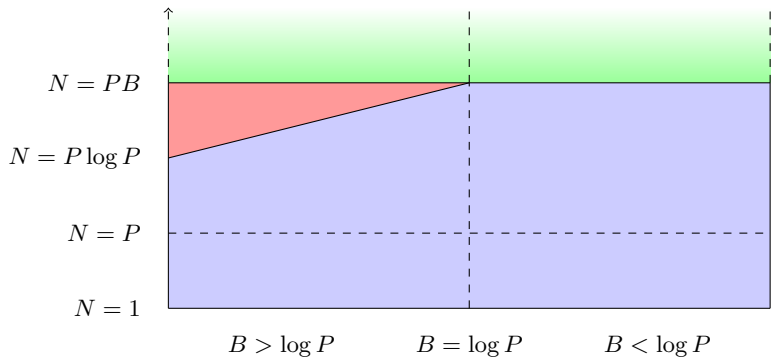
\createvar{smallInstanceSize}{\inputSize^\prime}{}

\subsubsection{Algorithms for Different Parameter Settings}

We use two algorithms for the different parameter ranges depicted in \autoref{figureListRankingParameters}.
Both use the concept of repeatedly bridging out an independent set $S$ of list elements, until the list is of constant size and therefore can easily be solved.
By reversing the process of bridging out elements the rank of all elements of the input list can be determined \cite{2010ArgeEtAlGraphAlgoPEM,1995ChiangExternalMemoryGraphAlgorithms,1991AndersonMillerRanomizedWorkOptLR}. 
Bridging out $S$ is technically done by setting in parallel for all $x\in S$ the link of the predecessor of $x$ to the successor of $x$. 
This can be done by a constant number of sorting operations \cite{2010ArgeEtAlGraphAlgoPEM}.

Therefore, we consider the slight differences of selecting an independent set:
\emph{Algorithm 1} uses the basic idea of Vishkin~\cite{1984VishkinRandomizedLR} to solve the list ranking problem.
It is proven to work efficiently for $P\leq \frac{\inputSize}{\blockSize^2}$ and $\memorySize = \blockSize^{\bO{1}}$ in \cite{2010ArgeEtAlGraphAlgoPEM}. 
The algorithm computes a random independent set $S$ of expected size $\frac{\inputSize-1}{4}$, by tossing a coin for each list element and selecting those elements for $S$ whose coin toss yields~$1$ and whose successors coin yields $0$.
Bridging out yields a smaller list which is processed recursively.
Algorithm 1 is used whenever $\inputSize \geq P\min\{\log P,\blockSize\}$ (green and red area in \autoref{figureListRankingParameters}).

Another well known, randomized algorithm for list ranking \smallInstanceSize elements in the PRAM model was presented by Anderson and Miller~\cite{1991AndersonMillerRanomizedWorkOptLR}.
We give a PEM version of it as Algorithm 2 and use it whenever  $\smallInstanceSize \leq P \min\{\blockSize,\log P\}$ (blue area in \autoref{figureListRankingParameters}):

\emph{Algorithm 2} assigns to at most $P$ processors a queue of $\min\{\log P,\blockSize\}$ list elements which are located in consecutive cells of the hard disk, which is by definition at most one block. 
Every processor bridges the head element of its queue out (this is called a round) until all queues are empty.
Thus there is no recursive processing.
If two processors try to bridge out successive list elements, the same random protocol to break ties, as used in Algorithm 1.
Anderson and Miller prove in \cite{1991AndersonMillerRanomizedWorkOptLR}, that the probability, that a queue is not empty after $16\log P$ rounds is $P^{-\frac{9}{8}}$. 

\subsubsection{Analysis for Different Parameter Settings}

A deeper analysis of the merge sort algorithm of \cite{2008ArgeEtAlFundamentalPEM} yields improvements in the parameter range (cf. \autoref{theoremMergeSortPEM}).


Therefore reducing the size of a list ranking instance with size $\inputSize \geq P\blockSize$ to an instance of size~$\smallInstanceSize \leq P\blockSize$ with Algorithm 1 takes at most \sortp{\inputSize,\memorySize,\blockSize}~I/Os: 
the number of parallel I/Os needed to reduce a list of size \inputSize to a list of size~\smallInstanceSize is \bO{\sortp{\inputSize,\memorySize,\blockSize}} due to two facts.
The list sizes of the recursive algorithm are geometrically decreasing and the first sorting steps are dominant since $\frac{3}{4}\sortp{\inputSize,\memorySize,\blockSize} \geq \sortp{\frac{3}{4}\inputSize,\memorySize,\blockSize}$ \cite{1995ChiangExternalMemoryGraphAlgorithms,2010ArgeEtAlGraphAlgoPEM}. 

We conclude the complexity of the list ranking problem by giving efficient algorithms for solving the last \smallInstanceSize list elements. 
If it is possible to show that this can be done in \bO{\log P} I/Os, this yields that list ranking takes \sortp{\inputSize,\memorySize,\blockSize}~I/Os, since \sortp{\inputSize,\memorySize,\blockSize} becomes \bO{\log P} when sorting $P\blockSize$ elements.
However this will not be possible in all cases.

If $\smallInstanceSize\leq P\log P$ we use Algorithm 2 to obtain a bound of \bO{\log P} I/Os to solve the list ranking problem of the remaining size. 
Assigning to each processor the at most $\min\{\log P,\blockSize\}$ list elements can be done with two parallel I/Os, since the algorithm runs on a CREW PEM.

Note that in Algorithm 2 bridging out the independent set $S$ of list elements of a round can be done with constant number of parallel I/Os by using direct processor to processor communication since there is at most one element bridged out per processor.
Since a queue of Algorithm 2 is not empty after \bO{\log P} rounds with probability~$P^{-\frac{9}{8}}$, a union bound and some calculations show that the probability that there is a queue which is not empty after \bO{\log P} I/Os is~$\smallInstanceSize P^{-\frac{9}{8}} = \bO{\frac{1}{\log \smallInstanceSize}}$ (using $P\leq \smallInstanceSize \leq P\log P$).
Thus, with high probability, there are at most \bO{\log(P)} parallel I/Os needed to solve a list ranking instance of size \smallInstanceSize.

Thus there is one gap left (the red area in \autoref{figureListRankingParameters}), if $\blockSize > \log P$ and if~$P \log P \leq \inputSize^\prime \leq P\blockSize$.
In this case we use Algorithm 1 and thus the expected size of the instance in each round is reduced by at least a fourth.
This yields at most~$\bO{\log(P\blockSize) -\log(P\log P)}=\bO{\log\frac{B}{\log P}}$ rounds.
As mentioned, in this parameter setting, $\smallInstanceSize\leq P\blockSize$, sorting and therefore bridging out takes at most~\bO{\log P} parallel I/Os.
Therefore the number of parallel I/Os in this parameter range can be bounded by \bO{(\log\frac{B}{\log P})(\log P)}.

In total this leads to \autoref{theoremRuntimePEMListRanking}, which is restated for completeness:

\begin{theorem}
\label{theoremRuntimePEMListRankingRestated}
The expected number of parallel I/Os, needed to solve the list ranking problem of size \inputSize in the CREW PEM model with $P\leq \frac{\inputSize}{\blockSize}$ is 
\begin{eqnarray*}
\mathcal{O}\left(\sortp{\inputSize,\memorySize,\blockSize}+(\log P)\lgb\frac{\blockSize}{\log P}\right) 
\end{eqnarray*}
which is for $\blockSize < \log P$ just \sortp{\inputSize,\memorySize,\blockSize}. 
\end{theorem}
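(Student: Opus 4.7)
The plan is to combine the two algorithms outlined in the paragraph preceding the theorem, so that each is used on the parameter range where it is efficient, and to bound the total cost by carefully summing the geometric recursion cost for the large-instance phase with the residual cost for the small-instance phase.

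First, I would formalize the large-instance phase using Algorithm~1 (Vishkin-style random independent-set bridging). As long as the current list has size $N' \geq P\min\{\log P, B\}$, one bridging step can be implemented with a constant number of invocations of PEM merge sort on $N'$ items, costing $\mathcal{O}(\sortp{N',M,B})$ parallel I/Os by \autoref{theoremMergeSortPEM}. Since the independent set bridged out has expected size $(N'-1)/4$, the list size decreases by a constant factor per recursive level in expectation. Using the standard fact that $\sortp{\cdot,M,B}$ is super-additive (i.e.\ $\tfrac34\sortp{N,M,B}\geq \sortp{\tfrac34 N,M,B}$) together with linearity of expectation, the cost of reducing from size $N$ to the cutoff size $N'$ telescopes to $\mathcal{O}(\sortp{N,M,B})$ parallel I/Os.

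Next I would handle the residual list of size $N' \leq P\min\{\log P,B\}$, splitting into two cases following Figure~\ref{figureListRankingParameters}. If $N'\leq P\log P$ (blue region), I switch to Algorithm~2: in a CREW PEM the $\min\{\log P, B\}$ elements assigned to each processor fit into one block, so the assignment step costs $\mathcal{O}(1)$ I/Os, and each round of head-bridging uses only processor-to-processor communication and hence $\mathcal{O}(1)$ I/Os. By the Anderson--Miller analysis, after $16\log P$ rounds the probability that any given queue is nonempty is $P^{-9/8}$; a union bound over $N' \leq P\log P$ queues keeps the failure probability at $O(1/\log N')$, giving $\mathcal{O}(\log P)$ parallel I/Os w.h.p., and this term is already absorbed by $\sortp{N,M,B}$ when $B<\log P$. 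If instead $B>\log P$ and $P\log P \leq N' \leq PB$ (the red region), I continue with Algorithm~1 for $\mathcal{O}(\log(B/\log P))$ more rounds until the instance shrinks to $\leq P\log P$, at which point Algorithm~2 finishes the job; each such round is a sort of $\mathcal{O}(PB)$ items, which by \autoref{theoremMergeSortPEM} costs $\mathcal{O}(\log P)$ I/Os, contributing the extra $(\log P)\lgb(B/\log P)$ term.

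Finally I would add the cost contributions from the three phases. The large-instance phase contributes $\mathcal{O}(\sortp{N,M,B})$, the red-region phase contributes $\mathcal{O}((\log P)\lgb(B/\log P))$, and the Algorithm~2 phase contributes $\mathcal{O}(\log P)$, which is dominated by the other two. When $B<\log P$, the red region is empty so the additive term vanishes and the bound collapses to $\sortp{N,M,B}$, as claimed. The main obstacle I anticipate is the analysis of Algorithm~2 on a CREW PEM rather than PRAM: one must verify that the queue assignment really costs only $O(1)$ I/Os (relying on the queues being single blocks in contiguous memory), that concurrent-read conflicts when two neighboring processors try to bridge the same element are resolved by the same randomized tie-breaking rule as in Algorithm~1 without breaking the $P^{-9/8}$ per-queue bound, and that the union bound is applied correctly to go from per-queue high probability to a high-probability bound on the overall running time.
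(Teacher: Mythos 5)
Your proposal is correct and follows essentially the same route as the paper's proof in \autoref{appendixUpperBounds}: the same three-regime decomposition (reduce with Algorithm~1 using the geometric decrease plus $\tfrac34\sortp{\inputSize,\memorySize,\blockSize}\geq\sortp{\tfrac34\inputSize,\memorySize,\blockSize}$, the Anderson--Miller queue algorithm with the $P^{-9/8}$ per-queue bound and union bound for $\smallInstanceSize\leq P\log P$, and $\bO{\log\frac{\blockSize}{\log P}}$ extra Algorithm~1 rounds at $\bO{\log P}$ I/Os each for the red region). The only nitpick is a wording slip: your first phase should be described as reducing to $\smallInstanceSize\leq P\blockSize$ (not to $P\min\{\log P,\blockSize\}$), since the red region $P\log P\leq\smallInstanceSize\leq P\blockSize$ you treat afterwards lies above that cutoff.
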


\end{document}